\newtheorem{rem}{Remark}
\newtheorem{prop}{Proposition}
\newcommand{\di}{\text{diag}}
\DeclareMathOperator*{\Tr}{Tr}
\definecolor{orange}{RGB}{255,107,0}
\definecolor{green}{RGB}{0,160,20}
\begin{document}
\title{Pinching Antennas in Blockage-Aware Environments: Modeling, Design, and Optimization}

\author{Ximing Xie,~\IEEEmembership{Member,~IEEE}, Fang Fang,~\IEEEmembership{Senior Member,~IEEE}, Zhiguo Ding,~\IEEEmembership{Fellow,~IEEE}, \\ and Xianbin Wang,~\IEEEmembership{Fellow,~IEEE}

\thanks{Ximing Xie, Fang Fang and Xianbin Wang are with the Department of Electrical and Computer Engineering, and Fang Fang is also with the Department of Computer Science, Western University, London, ON N6A 3K7, Canada (e-mail: \{xxie269, fang.fang, xianbin.wang\}@uwo.ca).}
\thanks{Zhiguo Ding is with the University of Manchester, Manchester, M1 9BB, UK, and Khalifa University, Abu Dhabi, UAE. (e-mail:zhiguo.ding@ieee.org).}

}\maketitle

\begin{abstract}
Pinching-antenna (PA) systems have recently emerged as a promising member of the flexible-antenna family due to their ability to dynamically establish line-of-sight (LoS) links. While most existing studies assume ideal environments without obstacles, practical indoor deployments are often obstacle-rich, where LoS blockage significantly degrades performance. This paper investigates pinching-antenna systems in blockage-aware environments by developing a deterministic model for cylinder-shaped obstacles that precisely characterizes LoS conditions without relying on stochastic approximations. Based on this model, a special case is first studied where each PA serves a single user and can only be deployed at discrete positions along the waveguide. In this case, the waveguide–user assignment is obtained via the Hungarian algorithm, and PA positions are refined using a surrogate-assisted block-coordinate search. Then, a general case is considered where each PA serves all users and can be continuously placed along the waveguide. In this case, beamforming and PA positions are jointly optimized by a weighted minimum mean square error integrated deep deterministic policy gradient (WMMSE-DDPG) approach to address non-smooth LoS transitions. Simulation results demonstrate that the proposed algorithms significantly improve system throughput and LoS connectivity compared with benchmark methods. Moreover, the results reveal that pinching-antenna systems can effectively leverage obstacles to suppress co-channel interference, converting potential blockages into performance gains.

\end{abstract}

\begin{IEEEkeywords}
Blockage, cylinder-shaped obstacle, pinching antennas, placement design
\end{IEEEkeywords}

\section{Introduction}
As wireless networks evolve toward the next generation, numerous emerging technologies have been proposed to meet the stringent requirements of high throughput, high reliability, and low latency \cite{10054381}. Among these, flexible-antenna technologies such as reconfigurable intelligent surfaces (RISs) \cite{wuRIS}, fluid antennas \cite{wongfluid}, and movable antennas \cite{zhumoveable} have attracted particular attention due to their ability to dynamically reconfigure wireless channels. In particular, RISs employ multiple passive elements to reflect incident signals and reshape the wireless channel by adjusting their phase shifts \cite{xieRIS}. In contrast, fluid antennas dynamically alter their physical shape \cite{wongfluid2}, and movable antennas adjust their position \cite{zhumove2}, both enabling direct reconfiguration of the wireless channel. These flexible-antenna systems offer performance advantages in data rate, coverage, and outage over conventional fixed-antenna systems and are therefore considered promising candidates for deployment in 6G networks. \par

Although flexible-antenna systems can enhance wireless performance, they still face significant limitations in mitigating large-scale path loss, especially when the line-of-sight (LoS) link is blocked \cite{dingpinchi}. Specifically, RISs create a virtual LoS link between the transmitter and receiver through reflective elements; however, their effectiveness is often constrained by severe attenuation resulting from long propagation distances. In contrast, fluid and movable antennas can only adjust their positions within a few wavelengths, which limits their ability to establish reliable LoS links. To address this challenge, pinching antennas (PAs) have recently been proposed as a promising member of the flexible-antenna family by DOCOMO in 2022 \cite{suzuki2022pinching}. A pinching-antenna system typically comprises one or more dielectric waveguides, along which multiple low-cost plastic pinches are deployed as additional dielectric elements. Electromagnetic waves are radiated from the locations where these pinches are placed, thereby enabling the establishment of LoS links through the dynamic positioning of pinches at desired locations \cite{liu2025pinching}. As a result, the PA placement design in pinching-antenna systems is an important research direction. \par

Beyond their capability of establishing LoS links, pinching-antenna systems have also been shown to exploit obstacles for interference management \cite{dingblockage}. In particular, a PA can be positioned to maintain a LoS link with its intended user, while obstacles simultaneously block LoS paths to other users, thereby mitigating co-channel interference. This ability makes PA systems different from other flexible-antenna technologies and provides new possibilities for interference management, particularly in indoor environments where obstacles are prevalent. Typical examples include shopping malls, airports, office buildings, factories and warehouses, where numerous pillars, walls and racks frequently block the LoS between conventional base station antennas and users. In such environments, a dielectric waveguide integrated into ceilings or walls can host multiple movable pinching antennas and create radiation points close to user hotspots with reduced blockage. However, to realize these gains, PA placement must be carefully designed, since performance depends on user locations, obstacle geometry, and channel conditions. This requires joint consideration of PA placement, waveguide assignment, beamforming, and power control, which leads to a high-dimensional and non-convex optimization problem. \par

\subsection{Related Works}
Since pinching antennas were proposed by DOCOMO, many research papers have been conducted on pinching-antenna systems to show its superiority from both analytical and optimization perspectives. In \cite{dingpinchi}, a mathematical model for pinching-antenna systems was first developed, and analytical results were presented to demonstrate their superior performance in both downlink orthogonal multiple access (OMA) and non-orthogonal multiple access (NOMA) scenarios. Based on this, the upper bound on the array gain of pinching-antenna systems was analyzed in \cite{ouyangarraygain}.   To further extend PA research to practical settings, the authors in \cite{tyroperformance} derived closed-form expressions for the outage probability and average rate of downlink pinching-antenna systems by incorporating both free-space path loss and waveguide attenuation under realistic conditions. In addition to downlink scenarios, \cite{houperformance} derived closed-form expressions for the analytical, asymptotic, and approximate ergodic rates in three uplink configurations: multiple PAs serving a single user, a single PA serving a single user, and a single PA serving multiple users. Beyond analytical performance studies, another important research direction has focused on optimization methods for PA systems, with the goal of jointly designing PA placement and resource allocation to maximize system performance. In \cite{tegosuplinkrate, xudownlinkrate}, the uplink and downlink data rates were maximized by optimizing PA placement. These works were further extended to multi-user NOMA scenarios, where the sum rate maximization problems for uplink and downlink systems were investigated in \cite{zeng2025sum, zhousumdownlink}. Furthermore, the authors in \cite{xiepinching} proposed a low-complexity PA placement design that achieves near-maximal sum rate without relying on computationally intensive algorithms. However, the aforementioned studies require frequent repositioning of PAs when users change locations, which poses significant hardware challenges. To overcome this limitation, an antenna activation mechanism was proposed in \cite{wangactive1}, eliminating the need to physically move PAs. In this approach, multiple PAs are pre-deployed along the waveguide, and their activation status is controlled such that only the required PAs are activated while the others remain inactive. Then, the joint optimization of antenna activation and resource allocation to maximize sum rate in a multi-waveguide pinching-antenna system was studied in \cite{wangactive2}. \par
Beyond studies on the pinching-antenna system itself, recent works have investigated its integration with 6G technologies, particularly integrated sensing and communication (ISAC). In \cite{ding2025pinching}, the Cramér–Rao lower bound (CRLB) was derived for PA-assisted ISAC, showing notable localization accuracy gains over conventional antennas. A two-waveguide design was later proposed in \cite{zhang2025integrated}, employing penalty-based alternating optimization to boost illumination power under QoS constraints. To enhance sensing robustness, \cite{khalili2025pinching} exploited target diversity via dynamic PA activation and a Chernoff-bound based convex approximation method. Most recently, \cite{mao2025multi} addressed multi-waveguide PA-ISAC systems, jointly optimizing PA placement and beamforming through fine-tuning and successive convex approximation. Meanwhile, physical layer security (PLS) has emerged as another important application domain for pinching-antenna systems. In \cite{zhu2025pinching}, a secure framework with pinching beamforming was proposed, where a PA-wise successive tuning algorithm enhanced secrecy in single-waveguide systems and artificial noise was employed for multi-waveguide setups. An artificial-noise aided scheme was further developed in \cite{papanikolaou2025secrecy}, combining closed-form single-waveguide solutions with alternating optimization for multiple waveguides, achieving clear secrecy gains over conventional MIMO. \par
\begin{figure}[t]
     \centering
     \includegraphics[width=0.5\textwidth]{./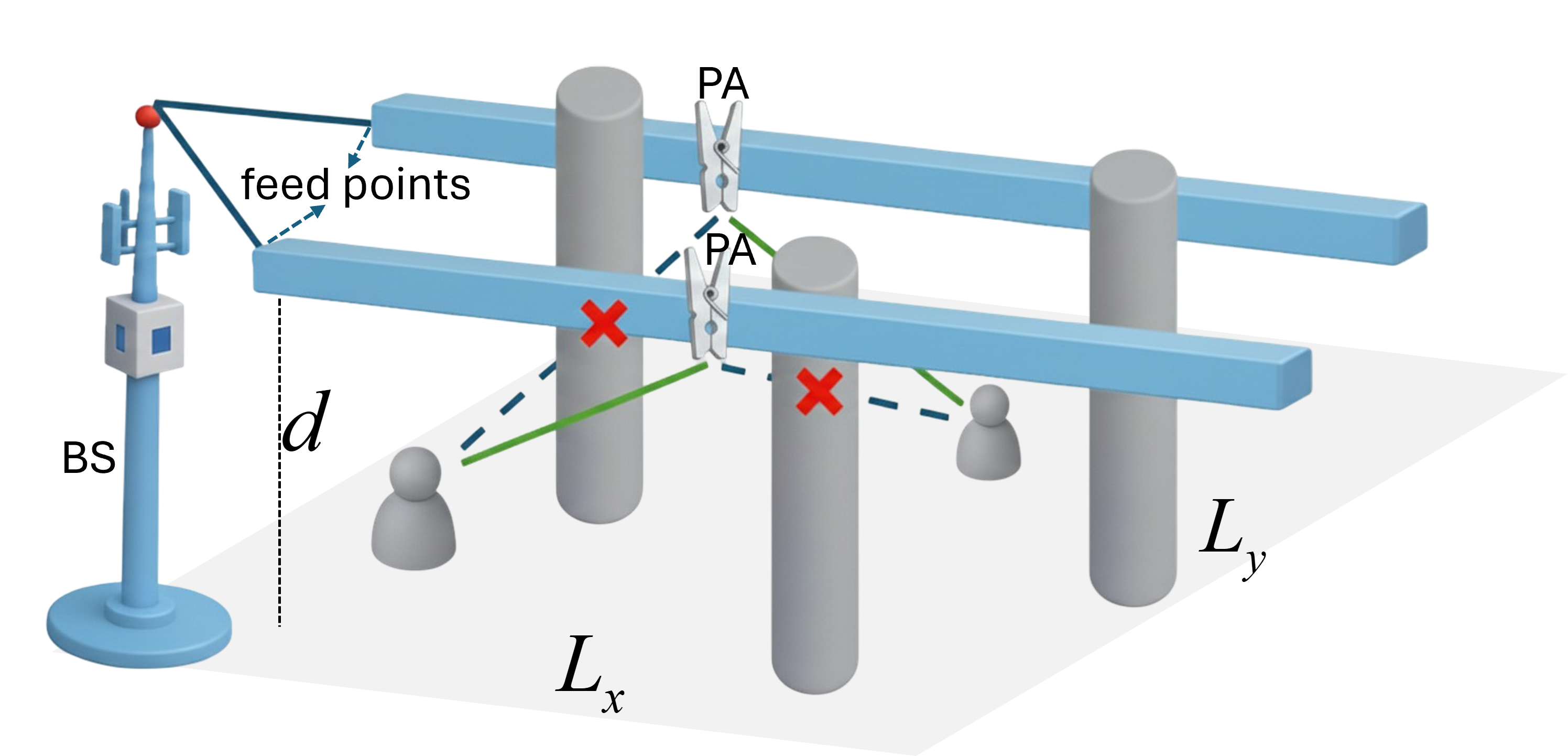}
     \caption{Pinching-antenna systems in blockage-aware environments.}
     \label{system model}
     \vspace{-0.5cm}
\end{figure}
\begin{figure*}[t]
     \centering
     \includegraphics[width=0.95\textwidth]{./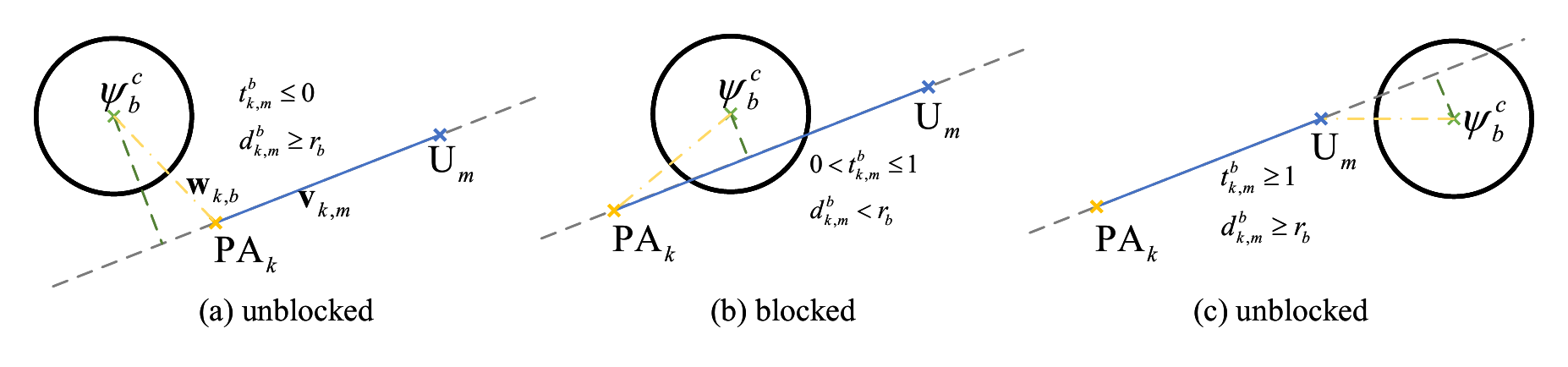}
     \caption{Three cases of the blockage model.}
     \label{blockage case}
     \vspace{-0.5cm}
\end{figure*}
\subsection{Motivation and Contributions}
The aforementioned studies primarily assumed ideal environments without obstacles. However, in practical scenarios, especially indoor environments, obstacles are common and often block LoS links. A recent study \cite{dingblockage} introduced pinching-antenna systems into blockage-aware environments and revealed a novel feature: PAs can exploit obstacles to bypass co-channel interference. A following work \cite{wangblockage} maximized the sum rate of pinching-antenna systems in blockage-aware environments. However, both studies relied on statistical blockage models, which are insufficient to accurately capture the characteristics of realistic blockage-aware environments. To further demonstrate the potential of pinching-antenna systems in practical blockage-aware environments, this paper develops a realistic blockage model for cylinder-shaped obstacles, which abstract structural elements such as pillars and columns that are common in large indoor venues including shopping malls, office buildings, factories and warehouses. A special case is first investigated, where each PA serves a single user and its position is selected from discrete candidate points along the waveguide. Subsequently, a general case is studied, where each PA serves all users with continuously adjustable placement along the waveguide. The main contributions of this paper are summarized as follows:
\begin{itemize}
    \item We propose a deterministic blockage model for cylinder-shaped obstacles. The proposed model checks whether the LoS link is available by checking the closest point from the obstacle’s center to the PA–user line segment and comparing that distance to the obstacle’s radius. This yields exact blockage regions bounded by tangents, produces per-PA blockage maps without any stochastic approximation, and plugs directly into the link model through a binary LoS indicator.
    \item We consider a special case where each PA serves a single user and chooses from predefined positions on its associated waveguide in blockage-aware environments. First, we perform a one-to-one waveguide–user assignment via the Hungarian algorithm. Next, we propose a surrogate-assisted block-coordinate
 (BCD) algorithm to efficiently search PA positions.  
    \item  We further consider a general case where each PA serves all users and can be positioned anywhere along the waveguide in blockage-aware environments. The problem is formulated as a joint optimization of beamforming and PA placement. To address non-smooth LoS transitions caused by obstacles, a weighted minimum mean square error integrated deep deterministic policy gradient (WMMSE-DDPG) reinforcement learning scheme is proposed to jointly optimize beamforming and PA placement.
    \item  Simulation results show that the proposed algorithms consistently outperform benchmark schemes in both cases. Moreover, the results demonstrate that pinching-antenna systems can exploit obstacles to mitigate co-channel interference, transforming potential blockages into performance gains while conventional fixed-antenna systems cannot achieve this advantage.
\end{itemize}

\subsection{Organization}
The remainder of this paper is organized as follows. Section II introduces the blockage model and system model. Sections III and IV present the proposed algorithms for the special and general cases, respectively. Section V provides numerical results. Finally, Section VI concludes the paper.

\section{System Model}
In this section, we first build a blockage model for cylinder-shaped obstacles in pinching-antenna systems. The potential blockage area and whether the user is located in a blockage area can be determined by the blockage model. Then, we integrate the blockage model into the system model to indicate how blockage affects communication performance. In this paper, we consider a system where $K$ parallel waveguides are deployed within a rectangular service area of dimensions $L_x \times L_y$ at a uniform height $d$. The environment contains $B$ cylinder-shaped obstacles, each with height $d$. A single PA is deployed on each waveguide, with the PA on the $k$-th waveguide denoted by ${\rm PA}_k$. Moreover, $M$ single-antenna users, denoted by ${\rm U}_m$ for $m = 1, \ldots, M$, are uniformly distributed across the service area. The overall system layout is illustrated in Fig. \ref{system model}.

\subsection{Blockage Model}
To build the blockage model, we consider the projections of waveguides and obstacles on the two-dimensional $x-y$ plane. To efficiently design the PA placement, we assume that a user is located within the blockage area and receives no service if its LoS link is blocked. Scattering and multi-path effects are not considered in this analysis, since they will make the blockage area stochastic and spatially uncertain. \par

Note that the projection of a cylinder-shaped obstacle on the $x-y$ plane is a circle. Let $\psi_b^c = (x_b^c,y_b^c)$ and $r_b$ denote the center coordinate and the radius of the $b-$th cylinder-shaped obstacle's projection, respectively. The coordinate of ${\rm PA}_k$ is $\psi^{\rm Pin}_k = (x_k^{\rm p}, y_k^{\rm p})$, and the coordinate of ${\rm U}_m$ is $\psi_m = (x_m,y_m)$, respectively. We draw two tangents from $\psi_k^{\rm Pin}$ to the circular projection. The blockage region of ${\rm PA}_k$ caused by the $b-$th obstacle is defined as the area enclosed by the two tangents, the boundary of the service area, and the circular projection, denoted by $\mathcal{A}_k^b$. To determine whether ${\rm U}_m$ is located in $\mathcal{A}_k^b$, we calculate the minimum distance from the center of the obstacle to the line segment connecting the PA and the user. A blockage occurs if this distance is less than the obstacle's radius. Let $\mathbf{v}_{k,m} = \psi_m - \psi_k^{\rm Pin} = [x_m - x_k^{\rm p}, y_m - y_k^{\rm p}]$ denote the vector from the ${\rm PA}_k$ to ${\rm U}_m$ and $\mathbf{w}_{k,b} = \psi_b^c - \psi_k^{\rm Pin} = [x_b^c - x_k^{\rm p}, y_b^c - y_k^{\rm p}]$ denote the vector from ${\rm PA}_k$ to the $b-$th obstacle's center, respectively. The first step is to calculate the normalized projection of $\mathbf{w}_{k,b}$ onto $\mathbf{v}_{k,m}$, which is given by
\begin{equation}
    t_{k,m}^b = \frac{\mathbf{w}_{k,b} \cdot \mathbf{v}_{k,m}}{||\mathbf{v}_{k,m}||^2}. \label{projection parameter}
\end{equation}

Let $\Tilde{l}_{k,m}$ denote the line segment from ${\rm PA}_k$ to ${\rm U}_m$ and $l_{k,m}$ denote the infinite line containing the line segment $\Tilde{l}_{k,m}$. $t_{k,m}^b$ determines the location of the point on $l_{k,m}$ that is closest to the $b-$th obstacle's center $\psi_b^c$. Let $\Tilde{\psi}_{k,m}^b$ denote the closest point on $\Tilde{l}_{k,m}$ to $\psi_b^c$. $\Tilde{\psi}_{k,m}^b$ can be calculated as follows:
\begin{equation}
     \Tilde{\psi}_{k,m}^b = \begin{cases}
        \psi_k^{\rm Pin}, & \mbox{if}~~t_{k,m}^b \leq 0 \\
        \psi_k^{\rm Pin} + t_{k,m}^b \mathbf{v}_{k,m}, & \mbox{if}~~ 0< t_{k,m}^b <1 \\
        \psi_m, & \mbox{if}~~t_{k,m}^b \geq 1
    \end{cases}. \label{closest point}
\end{equation}
Then, the distance between $ \Tilde{\psi}_{k,m}^b$ and $\psi_b^c$, denoted by $d_{k,m}^b$ can be calculated by
\begin{equation}
    d_{k,m}^b = ||\psi_b^c -  \Tilde{\psi}_{k,m}^b||. \label{minimum distance}
\end{equation}
If $d_{k,m}^b \leq r_b$, the LoS link between ${\rm PA}_k$ and ${\rm U}_m$ is blocked by the $b$-th obstacle, otherwise $d_{k,m}^b > r_b$.
\begin{prop}
    The LoS link between ${\rm PA}_k$ and ${\rm U}_m$ can only be blocked by the $b$-th obstacle when $ 0< t_{k,m}^b <1$. \label{prop 1}
\end{prop}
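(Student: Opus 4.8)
The plan is to prove the proposition by establishing its contrapositive through a case analysis on the projection parameter $t_{k,m}^b$, using the fact (implicit in the model) that both ${\rm PA}_k$ and ${\rm U}_m$ lie strictly outside every obstacle's circular projection. Physically, neither an antenna nor a user can be embedded inside a pillar, so $||\mathbf{w}_{k,b}|| > r_b$ and $||\psi_b^c - \psi_m|| > r_b$ for all $b$. The key observation is that the quantity $d_{k,m}^b$ defined in \eqref{minimum distance}, built from \eqref{projection parameter} and \eqref{closest point}, is exactly the minimum distance from the center $\psi_b^c$ to the segment $\tilde{l}_{k,m}$, so the blockage test $d_{k,m}^b \le r_b$ is equivalent to $\tilde{l}_{k,m}$ intersecting the closed disk of radius $r_b$ centered at $\psi_b^c$. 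It therefore suffices to show that this intersection is impossible whenever $t_{k,m}^b \le 0$ or $t_{k,m}^b \ge 1$.

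First I would handle the two boundary regimes separately using the piecewise expression \eqref{closest point}. When $t_{k,m}^b \le 0$, the closest point on the segment is the endpoint $\tilde{\psi}_{k,m}^b = \psi_k^{\rm Pin}$, whence $d_{k,m}^b = ||\psi_b^c - \psi_k^{\rm Pin}|| = ||\mathbf{w}_{k,b}|| > r_b$, so no blockage occurs. Symmetrically, when $t_{k,m}^b \ge 1$, the closest point is $\tilde{\psi}_{k,m}^b = \psi_m$, giving $d_{k,m}^b = ||\psi_b^c - \psi_m|| > r_b$, which again precludes blockage. Combining the two regimes shows that $d_{k,m}^b \le r_b$ forces $0 < t_{k,m}^b < 1$, which is precisely the claim.

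The only genuinely load-bearing ingredient, and hence the step I would state most carefully, is the ``endpoints outside the obstacle'' assumption, because the proposition fails without it: if ${\rm U}_m$ happened to sit inside the $b$-th disk, then $t_{k,m}^b \ge 1$ could still yield $d_{k,m}^b \le r_b$. I would therefore make this geometric assumption explicit at the outset and note that it is automatically satisfied in any well-posed deployment, where candidate PA positions and user locations are confined to the free-space region of the service area. A secondary point deserving a one-line justification is the standard claim that the construction in \eqref{projection parameter}--\eqref{closest point} indeed returns the true segment-to-point minimum distance: $t_{k,m}^b$ is the unconstrained minimizer of $||\psi_b^c - (\psi_k^{\rm Pin} + t\,\mathbf{v}_{k,m})||^2$ over $t \in \mathbb{R}$, and clamping it to $[0,1]$ recovers the constrained minimizer over the segment, a routine orthogonal-projection argument. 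Everything else follows immediately from the definitions.
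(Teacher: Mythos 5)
Your proof is correct and takes essentially the same approach as the paper: both argue the two boundary regimes $t_{k,m}^b \le 0$ and $t_{k,m}^b \ge 1$ via the piecewise formula \eqref{closest point}, invoking the physical assumptions that the PA and the user lie strictly outside the obstacle's circular projection. Your additional remark justifying that the clamped projection yields the true segment-to-point minimum distance is a nice touch the paper leaves implicit, but it does not change the argument.
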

\begin{proof}
    There are two physical assumptions. The first one is that PAs are outside the obstacle, which means $||\psi_b^c - \psi_k^{\rm Pin}|| > r_b$ and the second one is that users are also outside the obstacle, which means $||\psi_b^c - \psi_m|| > r_b$. From \eqref{closest point}, when $t_{k,m}^b <0$, we have  $\Tilde{\psi}_{k,m}^b = \psi_k^{\rm Pin}$. This directly leads to $d_{k,m}^b = ||\psi_b^c -  \psi_k^{\rm Pin}|| > r_b$, ensuring the LoS link is not blocked when $t_{k,m}^b \leq 0$. A similar proof for the case where $t_{k,m}^b \geq 1$ can be derived from the second physical assumption. 
\end{proof}
Given Proposition \ref{prop 1}, the condition that the LoS link between ${\rm PA}_k$ and ${\rm U}_m$ is blocked by the $b$-th obstacle can be expressed as 
\begin{equation}
    d_{k,m}^b \leq r_b ~~\text{AND} ~~ 0<t_{k,m}^b<1. \label{LoS blockage conditions}
\end{equation}
Fig. \ref{blockage case} illustrates three cases of the blockage model. In this figure, $\mathbf{w}_{k,b}$ and $\mathbf{v}_{k,m}$ are denoted by the yellow dashed line segment and blue solid line segment, respectively. \par

The total blockage region of ${\rm PA}_k$ caused by all obstacles is expressed as
\begin{equation}
    \mathcal{A}_k = \bigcup_{b=1}^B \mathcal{A}_k^b. \label{total blockage region}
\end{equation}
Fig. \ref{blockage model1} illustrates a simple case consisting of two PAs and one obstacle. 
\begin{figure}[t]
     \centering
     \includegraphics[width=0.4\textwidth]{./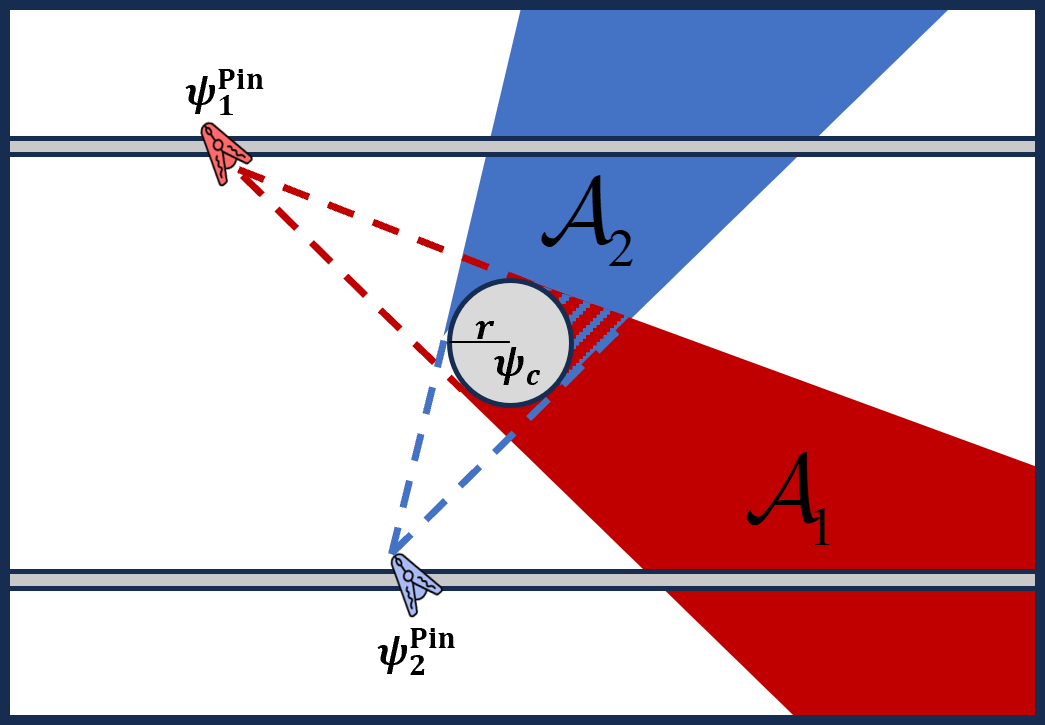}
     \caption{Illustration of the blockage from a cylinder-shaped obstacle.}
     \label{blockage model1}
\end{figure}

\subsection{Transmission Model and Problem Formulation}
Similar to \cite{dingblockage}, if ${\rm U}_m$ is located in ${\rm PA}_k$'s blockage area $\mathcal{A}_k$, the channel gain between ${\rm U}_m$ and ${\rm PA}_k$ is assumed to be $0$, otherwise, it is $h_{k,m} = \frac{\sqrt{\eta} e^{-2\pi j\left(\frac{1}{\lambda} |\psi_m - \psi_k^{\rm Pin}| + \frac{1}{\lambda_g}|\psi_{k,0}^{\rm Pin} - \psi_k^{Pin}|\right)}}{|\psi_m - \psi_k^{\rm Pin}|}$. $\eta = \frac{c^2}{16 \pi^2 f_c^2}$ denotes the free-space path loss coefficient, where $c$ is the speed of light and $f_c$ is the carrier frequency. $\lambda_g = \frac{\lambda}{n_{\rm eff}}$ denotes the waveguide wavelength in a dielectric waveguide, where $n_{\rm eff}$ denotes the effective refractive index of a dielectric waveguide. $\psi_{k,0}^{\rm Pin}$ denotes the position of the feed point of the $k-$th waveguide. Therefore, the channel gain between ${\rm U}_m$ and ${\rm PA}_k$ can be expressed as $\Tilde{h}_{k,m} = \alpha_{k,m} h_{k,m}$, where $\alpha_{k,m}$ is an indicator function for the LoS blockage and is given by
\begin{equation}
    \alpha_{k,m} = \begin{cases}
        0, & \mbox{if}~~\psi_m \in \mathcal{A}_k \\
        1, & \mbox{if}~~\psi_m \not\in \mathcal{A}_k
    \end{cases}. \label{LoS blockage indicator}
\end{equation} \par
Recall that the signal fed into the same waveguide must be the same in pinching-antenna systems \cite{dingpinchi}. Therefore, each waveguide transmits the superimposed signal of all users to serve them simultaneously. The superimposed signal transmitted by the $k$-th waveguide can be expressed as follows:
\begin{equation}
    s_k = \sum\limits_{m=1}^M p_{k,m} s_m, \label{design2 signal transmitted by k waveguide}
\end{equation}
where $p_{k,m}$ denotes the beamforming coefficient assigned to ${\rm U}_m$ on the $k$-th waveguide and $s_m$ denotes the desired signal of ${\rm U}_m$.  Similar to \cite{dingpinchi}, ${\rm U}_m$ receives signals from all the waveguides and its observation is given by
\begin{equation}
    y_m = \sum\limits_{k=1}^K \Tilde{h}_{k,m} p_{k,m}  s_m + \sum\limits_{i \neq m}\sum\limits_{k=1}^K \Tilde{h}_{k,m} p_{k,i}  s_i + n_m, \label{design 2 received signal}
\end{equation}
where $n_m$ denotes the additive noise with power $\sigma^2$. We assume that the signal satisfies $\mathbb{E}(|s_m|^2) = 1, \forall m$, where $\mathbb{E}(\cdot)$ is the expectation operation. Therefore, ${\rm U}_m$'s  signal-to-interference-plus-noise ratio (SINR) can be expressed as follows:
\begin{equation}
    \text{SINR}_m = \frac{|\sum_{k=1}^K \Tilde{h}_{k,m} p_{k,m}|^2}{\sum_{i \neq m} |\sum_{k=1}^K \Tilde{h}_{k,m} p_{k,i}|^2 + \sigma^2}. \label{design 2 SINR for user m}
\end{equation}
\eqref{design 2 SINR for user m} can be expressed as a compact form, which is given by
\begin{equation}
    \text{SINR}_m = \frac{|\Tilde{\mathbf{h}}_m^H \mathbf{p}_m|^2}{\sum_{i \neq m} |\Tilde{\mathbf{h}}_m^H \mathbf{p}_i|^2 + \sigma^2 }, \label{design 2 SINR for user m 2}
\end{equation}
where $\Tilde{\mathbf{h}}_m = [\Tilde{h}_{1,m}, \cdots, \Tilde{h}_{K,m}]^T$, and $\mathbf{p}_m = [p_{1,m}, \cdots, p_{K,m}]^T$. 
Then, the data rate of ${\rm U}_m$ is calculated by 
\begin{equation}
    R_m = \log_2(1 + \text{SINR}_m), \label{design 1 data rate user m}
\end{equation}
and the sum rate maximization problem can be formulated as follows:
\begin{subequations}\label{II-Prob0} 
\begin{align}
{\rm P_{0}}: \quad &\max_{\{\mathbf{\Psi},\mathbf{P}\}} \sum\limits_{m=1}^M R_m \label{PII00}\\
\text{s.t.} \quad & R_m \geq R_t, \quad \forall m \label{PII01}\\
\quad\quad & \sum_{m=1}^M ||\mathbf{p}_m||^2 \leq P_t,\label{PII02}\\
\quad\quad & 0 \leq x_k^{\rm p} \leq L_x, \quad\forall k \label{PII03}
\end{align}
\end{subequations}
where $\Psi = [\psi_1^{\rm Pin}, \cdots, \psi_K^{\rm Pin}]$ denotes the PA placement vector and $\mathbf{P} = [\mathbf{p}_{1}, \cdots, \mathbf{p}_M]$ denotes the beamforming matrix. Constraint \eqref{PII01} guarantees each user's data rate should meet the minimal target data rate requirement $R_t$. Constraint \eqref{PII02} ensures that the total transmit power does not exceed the power budget $P_t$. Constraint \eqref{PII03} restricts that the PA has to be placed on the waveguide.  Note that the channel vector $\Tilde{\mathbf{h}}_m$ is determined by the PA placement, hence $\Psi$ and $\mathbf{P}$ are coupled in the objective function \eqref{PII00} and the constraint \eqref{PII01}. As a result, ${\rm P_{0}}$ is difficult to solve directly.  

\section{A Special Case for the Discrete PA Placement Design}
In this section, we consider a special case which can provide more insights. In this case, each PA is restricted to a set of discrete positions along the waveguide and serves a single user exclusively. Specifically, each waveguide is uniformly divided into $N$ candidate positions, and the PA can only be placed at these positions. The set of $x$-coordinates for all candidate locations is given by $\mathcal{X} = \{\frac{n L_x}{N}|n=1,\cdots,N\}$. Moreover, each user is assigned to a waveguide and the PA on the waveguide only serves this user. We assume that each waveguide equally shares the power budget, hence, the transmit power of each waveguide is $P = \frac{P_t}{K}$. The signal transmitted by the $k$-th waveguide can be expressed as follows:
\begin{equation}
    s_k = \sum\limits_{m=1}^M \beta_{k,m} \sqrt{P} s_m, \label{design 1 signal}
\end{equation}
where $\beta_{k,m} \in \{0,1\}$ denotes the waveguide assignment indicator. Specifically, $\beta_{k,m}=1$ indicates that the $m$-th user is assigned the $k$-th waveguide, and $\beta_{k,m}=0$ otherwise. ${\rm U}_m$'s received signal can be expressed as follows:
\begin{equation}
    y_m = \sum\limits_{k=1}^K \Tilde{h}_{k,m} \sum\limits_{i=1}^M \beta_{k,i} \sqrt{P} s_i + n_m. \label{design 1 received signal}
\end{equation}
Therefore, ${\rm U}_m$'s SINR can be expressed as follows:
\begin{equation}
    \text{SINR}_m = \frac{P\sum_{k=1}^K \beta_{k,m} |\Tilde{h}_{k,m}|^2 }{P\sum_{i \neq m} \sum_{k=1}^K \beta_{k,i}|\Tilde{h}_{k,m} |^2 + \sigma^2}. \label{design 1 SINR for user m}
\end{equation}
Then, the data rate of ${\rm U}_m$ can be calculated by \eqref{design 1 data rate user m}.\par

In this special case, the sum rate maximization problem can be formulated as follows:
\begin{subequations}\label{I-Prob0} 
\begin{align}
{\rm P_{s:0}}: \quad &\max_{\{\mathbf{\Psi}, \boldsymbol{\beta}\}} \sum\limits_{m=1}^M R_m \label{PI00}\\
\text{s.t.} \quad & R_m \geq R_t, \quad \forall m \label{PI01}\\
\quad\quad & \sum\limits_{k=1}^K \beta_{k,m} = 1, \quad \forall m \label{PI02}\\
\quad\quad & \sum\limits_{m=1}^M \beta_{k,m} = 1, \quad \forall k \label{PI03}\\
\quad\quad & \beta_{k,m} \in \{0,1\}, \quad \forall k,m \label{PI04}\\
\quad\quad & x_k^{\rm p} \in \mathcal{X}, \quad \forall k \label{PI05} 
\end{align}
\end{subequations}
where $\boldsymbol{\beta}$ collects all waveguide assignment indicators. Constraint \eqref{PI01} guarantees the QoS requirement of each user. Constraints \eqref{PI02} and \eqref{PI03} ensure that each user is associated with exactly one waveguide, and each waveguide serves only one user. Note that ${\rm P_{s:0}}$ is a mixed-integer non-convex problem, which is difficult to solve directly. Given the coupling between the two variables, alternating optimization provides an effective means to address the resulting multi-variable problem, where in each iteration one variable is held fixed while the other is optimized. To simplify the problem ${\rm P_{s:0}}$ and enable efficient optimization, we consider the case $K=M$, such that each waveguide is exclusively assigned to one user.

\subsection{Waveguide Assignment via the Hungarian Algorithm}
The PA placement is assumed to be fixed during waveguide assignment optimization. Given the PA locations, the channel  $\tilde h_{k,m}$ is fixed \footnote{At the initial stage, each PA is randomly placed on the waveguide. We use this random PA placement as a starting point to optimize waveguide assignment.}. An important observation is that, once a user is assigned to a specific waveguide, the interference it experiences from the remaining PAs remains unchanged, regardless of how the other users are allocated. As a result, the sum rate with one-to-one mapping reduces to a linear assignment. In this case, it can be efficiently solved by the Hungarian Algorithm. \par
The first step is to build a weight matrix. For user $\mathrm{U}_m$ served by ${\rm PA}_k$, define
\begin{equation}
r_{m,k}\triangleq \log_2\!\left(1+\frac{P\,|\tilde h_{k,m}|^2}{P\!\left(\sum_{k'=1}^{K}|\tilde h_{k',m}|^2-|\tilde h_{k,m}|^2\right)+\sigma^2}\right),
\label{pair-rate}
\end{equation}
and collect $\mathbf{W}=[w_{m,k}]\in\mathbb{R}^{K\times K}$ with $w_{m,k}=r_{m,k}$ if the link is feasible and $w_{m,k}=-\infty$ otherwise, where feasibility means no LoS blockage $\alpha_{k,m}=1$. If any row/column of $\mathbf{W}$ has no feasible entry, the assignment is infeasible. Then, the assignment problem can be formulated as
\begin{algorithm}[t]
    \caption{Hungarian-based Waveguide Assignment}\label{hungarian}
    \begin{algorithmic}[1] 
        \STATE {\bf Row reduction:} Each row subtracts its row minimum in $\mathbf{C}$ to obtain $\mathbf{C}_1$.
        \STATE {\bf Column reduction:} Each column subtracts its column minimum in $\mathbf{C}_1$ to obtain $\mathbf{C}_2$.
        \STATE {\bf Initial stars:} Star a maximal set of independent zeros in $\mathbf{C}_2$; cover columns containing a starred zero.
        \WHILE{ number of covered columns $< K$}
            \STATE \textbf{Prime/augment:} Find an \emph{uncovered} zero and prime it.
            \IF {the primed zero’s row has no starred zero}
                \STATE build an alternating (prime$\leftrightarrow$star) path, flip marks (prime$\to$star, star$\to$unstar), clear all primes, uncover all rows/columns, then cover columns of starred zeros.
            \ELSE 
                \STATE Cover that row and uncover the column containing the starred zero; continue searching for an uncovered zero.
            \ENDIF
        \STATE \textbf{If no uncovered zero exists:} Let $\delta$ be the smallest uncovered entry; subtract $\delta$ from all uncovered rows and add $\delta$ to all covered columns.
        \ENDWHILE
        \STATE \textbf{Output:} Starred-zero positions give the optimal one-to-one assignment; objective is $\sum w_{m,k}$ at starred entries.
    \end{algorithmic}
\end{algorithm}
\begin{subequations}\label{I-Prob1} 
\begin{align}
{\rm P_{s:1}}: \quad &\max_{\{\boldsymbol{\beta}\}} \sum_{m=1}^{K}\sum_{k=1}^{K}\beta_{k,m}\, w_{m,k} \label{PI10}\\
\text{s.t.} \quad & \sum\limits_{k=1}^K \beta_{k,m} = 1, \quad \forall m \label{PI11}\\
\quad\quad & \sum\limits_{m=1}^M \beta_{k,m} = 1, \quad \forall k \label{PI12}\\
\quad\quad & \beta_{k,m} \in \{0,1\}, \quad \forall k,m \label{PI13}
\end{align}
\end{subequations}
However, the Hungarian Algorithm is a minimization method, which cannot be directly applied to ${\rm P_{s:1}}$. To make the Hungarian algorithm applicable, we convert weight matrix $\mathbf{W}$ to a non-negative cost matrix $\mathbf{C}$ for minimization. The element-wise transformation is given by
\begin{equation}
c_{m,k}=
\begin{cases}
c_{\max}-w_{m,k}, & \text{if $w_{m,k}$ is feasible},\\
M_{\text{big}}, & \text{if $w_{m,k}=-\infty$ },
\end{cases}
\label{cost-transform}
\end{equation}
where $c_{\max}=\max\{\,w_{m,k}~|~w_{m,k}~\text{is feasible}\}$ and $M_{\text{big}}$ is a constant with $M_{\text{big}} \gg c_{\text{max}}$. Then, ${\rm P_{s:1}}$ can be converted to a minimization problem by replacing $w_{k,m}$ with $c_{k,m}$, which the Hungarian Algorithm is applicable to. \par
Following the standard terminology of the Hungarian algorithm, any zero in the cost matrix may be marked as \emph{starred} ($0^\star$) or \emph{primed} ($0'$). A starred zero represents the current tentative assignment: at most one star per row and per column, and every column containing a star is \emph{covered}. \emph{Covering} a row/column means marking it so it is ignored in subsequent searches. A primed zero is a temporary mark used while searching for an augmenting path; an element can never be both starred and primed. If a primed zero appears in a row with no star, we build an alternating path and flip the marks along it (primes $\to$ stars, stars $\to$ unstar), thereby increasing the number of stars; all primes are then erased.  The details of the Hungarian Algorithm are summarized in Algorithm \ref{hungarian}, where \emph{independent zeros} means no two starred zeros share the same row or column and \emph{maximal} means we cannot add another zero without breaking independence.

\begin{rem}
When $K>M$, some waveguides may be left idle. If idle waveguides are silent, the achievable rate of ${\rm U}_m$ assigned to the $k$-th waveguide depends on which other waveguides are active, which is given by
\begin{equation}
    R_m(k|\mathbf{y})=\log_2\!\left(1+\frac{P|\Tilde{h}_{k,m}|^2}{P\sum_{k'\neq k} y_{k'}|\Tilde{h}_{k',m}|^2 + \sigma^2}\right),
\end{equation}
where $y_k\in\{0,1\}$ indicates whether waveguide $k$ is active. Hence, the weight matrix $\mathbf{W}$ is dependent on the activity pattern $\mathbf y = [y_1, \cdots, y_K]$ rather than fixed. As a result, the Hungarian algorithm is not guaranteed optimal for sum rate.
\end{rem}

\subsection{PA Placement Problem Formulation}
The waveguide assignment is assumed to be fixed during PA placement optimization. Let $\Pi$ denote a mapping between users and waveguides. $\Pi(m) = k$ means the $m-$th user has been assigned to the $k-$th waveguide. For a given $\Pi$, \eqref{design 1 SINR for user m} can be recast into
\begin{equation}
    \text{SINR}_m = \frac{P |\Tilde{h}_{k,m}|^2 }{P\sum_{i \neq m} |\Tilde{h}_{\Pi(i),m} |^2 + \sigma^2}. \label{pa placement SINR for user m 1}
\end{equation}
Since the PA positions are discrete in this case, \eqref{pa placement SINR for user m 1} can be further recast into
\begin{equation}
    \text{SINR}_m = \frac{P \sum_{n=1}^N \gamma_{k,n} |\Tilde{h}_{k,m}^n|^2 }{P\sum_{i \neq m} \sum_{n=1}^N \gamma_{\Pi(i),n} |\Tilde{h}_{\Pi(i),m}^n |^2 + \sigma^2}. \label{pa placement SINR for user m 2}
\end{equation}
In \eqref{pa placement SINR for user m 2}, $\gamma_{k,n} \in \{0,1\}$ denotes the PA position indicator. Specifically, $\gamma_{k,n} = 1$ indicates ${\rm PA}_k$ is placed at $n-$th position on the $k-$th waveguide, i.e., $x_k^{\rm p} = \frac{n L_x}{N}$, and $\gamma_{k,n} = 0$ otherwise. $|\Tilde{h}_{k,m}^n|^2$ denotes the squared channel gain between ${\rm PA}_k$ and ${\rm U}_m$ when ${\rm PA}_k$ at the $n-$th position of the $k-$th waveguide, which is given by
\begin{equation}
    |\Tilde{h}_{k,m}^n|^2 = \frac{\alpha_{k,m}^n \eta }{(x_m - x_{k,n}^{\rm p})^2 + D_{k,m}}, \label{squared channel gain}
\end{equation}
where $D_{k,m} = (y_m - y_k^{\rm p})^2 + d^2$ is a geometry-dependent constant for the given user and waveguide, $\alpha_{k,m}^n$ and $x_{k,n}^{\rm p}$ are the LoS indicator and the $x$-coordinate of the $n-$th PA position on the $k-$th waveguide, respectively. Hence, for a fixed candidate $n$, $|h_{k,m}^n|^2$ is a known constant; across candidates, it varies only through the horizontal offset $x_m - x_{k,n}^{\rm p}$. \par
Note that the user's data rate is mainly determined by signal power and interference power. Let
\begin{equation}
    S_m = P \sum_{n=1}^N \gamma_{k,n} |\Tilde{h}_{k,m}^n|^2 \label{design 1 signal power}
\end{equation}
\begin{equation}
    I_m = P\sum_{i \neq m} \sum_{n=1}^N \gamma_{\Pi(i),n} |\Tilde{h}_{\Pi(i),m}^n |^2 \label{design 1 interference power}
\end{equation}
denote signal power and interference power of ${\rm U}_m$, respectively. To efficiently calculate $S_m$ and $I_m$, we build a power lookup matrix for each waveguide collecting all possible squared channel gains. The power matrix for the $k-$th waveguide can be expressed as follows:
\begin{equation}
    \mathbf{H}_k = \begin{bmatrix}
        |\Tilde{h}_{k,1}^1|^2 & \cdots  &|\Tilde{h}_{k,1}^N|^2 \\
        \vdots & \ddots & \vdots \\
        |\Tilde{h}_{k,M}^1|^2 & \cdots  &|\Tilde{h}_{k,M}^N|^2
    \end{bmatrix} \in \mathbb{R}^{M \times N}, \forall k \label{power matrix}.
\end{equation}
Precompute and store the power matrices so that, during PA placement updates, desired-signal and interference powers can be retrieved from lookup tables rather than recomputed. Then, \eqref{design 1 signal power} and \eqref{design 1 interference power} become the efficient lookup forms
\begin{equation}
    S_m = P\mathbf{e}_m^\top \mathbf{H}_k\boldsymbol{\gamma}_k \label{design 1 signal power lookup form}
\end{equation}
\begin{equation}
    I_m = P\sum_{i\neq m} \mathbf{e}_m^\top \mathbf{H}_{\Pi(i)}\,\boldsymbol{\gamma}_{\Pi(i)} \label{design 1 interference power lookup form}
\end{equation}    
where $\mathbf{e}_m$ is the $m$-th canonical basis vector and $\boldsymbol{\gamma}_k = [\gamma_{k,1}, \cdots, \gamma_{k,N}]$ is the PA position indicator vector. Hence, the data rate of ${\rm U}_m$ can be rewritten as
\begin{equation}
    R_m = \log_2 \left(1+\frac{S_m}{I_m+\sigma^2} \right). \label{design 1 rewritten data rate}
\end{equation}
With a fixed mapping $\Pi$, the PA placement optimization problem can be recast into
\begin{subequations}\label{prob:PA-gamma}
\begin{align}
{\rm P_{s:2}}: \max_{\{\boldsymbol\gamma_k\}}~& \sum_{m=1}^M \log_2\!\left(1+\frac{S_m}{I_m+\sigma^2}\right) \label{PI20}\\
\text{s.t.}~~&
R_m\ge R_t,\ \forall m \label{PI21} \\
& \mathbf e_{\Pi(k)}^\top \mathbf H_k\,\boldsymbol\gamma_k > 0,\ \forall k\quad \label{PI22}\\
&\sum_{n=1}^N \gamma_{k,n}=1,\  \forall k  \label{PI23} \\
& \gamma_{k,n}\in\{0,1\},\ \forall k,n \label{PI24}
\end{align}
\end{subequations}
Constraint \eqref{PI22} ensures that the served LoS link is not blocked. \par
This problem is non-convex, which makes direct optimization intractable. Moreover, changing the position of a single PA not only changes the desired-signal power for its served user but also changes the interference experienced by all other users. As a result, optimizing all PA positions jointly becomes highly complicated. To address this challenge, we adopt a BCD approach, which iteratively updates one PA’s position at a time while keeping others fixed. This decomposition simplifies the optimization and reduces computational complexity.\par
\subsection{Surrogate-Assisted Block Coordinate Discrete Search}
The key idea of BCD is to improve $\boldsymbol{\gamma}_k, \forall k$ by updating one waveguide at a time. Let ${\rm PA}_k$ currently locate at the candidate position $n$ and its served user be ${\rm U}_m$. If ${\rm PA}_k$ moves to another candidate position $n'$, the incremental signal and interference power updates via lookup in power matrix \eqref{power matrix} is given by 
\begin{equation}
    S_m' \gets P\,[\mathbf H_k]_{m,n'}, \label{self signal power update}
\end{equation}
\begin{equation}
    I_j' \gets I_j - P\,[\mathbf H_k]_{j,n} + P\,[\mathbf H_k]_{j,n'},\quad \forall j\neq m, \label{other interference power update}
\end{equation}
\begin{equation}
    I'_m \gets I_m, \label{self interference power update}
\end{equation}
\begin{equation}
    S'_j \gets S_j, \quad \forall j\neq m. \label{other signal power update}
\end{equation}
From the updating rule, moving a PA to a different position only affects the desired-signal power of its served user and the interference power experienced by the other users. In contrast, the interference power of the served user and the desired-signal power of the other users remain unchanged. Accordingly, the updated sum rate can be expressed as
\begin{equation}
    F' \triangleq \sum\limits_{m=1}^M \log_2 \left(1 + \frac{S'_m}{I'_m + \sigma^2}\right). \label{updated sum rate}
\end{equation}
For simplicity of notation, we denote the sum rate by $F$. The next step is to determine whether the new candidate position $n'$ is better than the original position $n$. The new candidate $n'$ is accepted if (i) $R_m' \geq R_t, \forall m$, and (ii) $F' > F$. In this case, accepted PA moves strictly increase $F$ with a finite state space. The BCD algorithm terminates in finitely several sweeps at a coordinate-wise solution. \par
Although the BCD algorithm can effectively solve the PA placement problem, it requires each PA to sequentially evaluate all candidate positions. Consequently, when $N$ is large (i.e., when each waveguide has many candidate locations), the computational complexity of the algorithm increases significantly. To avoid exact evaluation for all $N$ candidates, we propose a surrogate accelerated ranking mechanism. In this mechanism, a score is assigned to each move from the current position to a new position to quantify its contribution to the sum rate $F$. The candidate positions on the same waveguide are then ranked according to this score, and the PA only evaluates the top-ranked positions, while those with lower ranks are ignored. \par
Note that \eqref{design 1 rewritten data rate} can be rewritten as
\begin{equation}
    R_m = \frac{1}{\ln2} \left[\ln (S_m + I_m + \sigma^2) - \ln (I_m + \sigma^2)\right]. \label{design 1 rewritten data rate rewritten}
\end{equation}
Recall that when only $x_k^{\rm p}$ changes, only $S_m$ and $I_j, \forall j, j\neq m$ change. In order to describe the rate of change of $F$ with $S_m$ and every $I_j$, we calculate the partial derivatives of $F$ with $S_m$ and $F$ with every $I_j$. Let $T_m = S_m + I_m + \sigma^2$ and $U_m = I_m + \sigma^2$, then the partial derivatives are given by
\begin{equation}
    \frac{\partial F}{\partial S_m} = \frac{\partial R_m}{\partial S_m} = \frac{1}{\ln2} \cdot \frac{1}{T_m}, \label{partial of Sm}
\end{equation}
and
\begin{equation}
    \frac{\partial F}{\partial I_j} = \frac{\partial R_j}{\partial I_j} = \frac{1}{\ln2}\left(\frac{1}{T_j} - \frac{1}{U_j}\right), \forall j, j\neq m. \label{partial of Ij}
\end{equation}
Note that $\frac{\partial F}{\partial S_m} > 0$ and $\frac{\partial F}{\partial I_j} < 0$, which means that increasing the desired-signal power $S_m$ improves the sum rate, while increasing the interference power $I_j$ reduces it. Therefore, the candidate positions that provide stronger desired signals to their associated users and induce lower interference to other users are more favorable for sum rate maximization. \par
When ${\rm PA}_k$ moves from $n$ to $n'$, the variations of $S_m$ and $I_j$ are given by
\begin{equation}
    \Delta S_m = P([\mathbf{H}_k]_{m,n'} - [\mathbf{H}_k]_{m,n}) \label{variation of Sm}
\end{equation}
and
\begin{equation}
    \Delta I_j = P([\mathbf{H}_k]_{j,n'} - [\mathbf{H}_k]_{j,n}). \label{variation of Ij}
\end{equation}
Then, we calculate the first-order surrogate gain of $F$, which is given by
\begin{equation}
    \Delta \Tilde{F} = \left(\frac{1}{\ln2} \cdot \frac{1}{T_m}\right) \Delta S_m + \sum\limits_{\substack{j=1\\ j \neq m}}^M \left(\frac{1}{\ln2}\left(\frac{1}{T_j} - \frac{1}{U_j}\right)\right) \Delta I_j. \label{surrogate gain}
\end{equation}
\begin{algorithm}[t]
\caption{Surrogate-Assisted BCD for PA Placement}
\label{sbcd}
\begin{algorithmic}[1]
\STATE \textbf{Precompute:} $\{\mathbf H_k\}_{k=1}^K$.
\STATE \textbf{Initialize:} Choose feasible one-hot $\boldsymbol\gamma_k$ ($\sum_n\gamma_{k,n}=1$). Compute $S_i,I_i,R_i, \forall i, F$.
\FOR{$t=1$ to $T_{\max}$}
  \STATE improved = \textbf{false}
  \FOR{$k=1$ to $K$}
    \STATE $\Pi(m) = k$; current index $n = \arg\max_u \gamma_{k,u}$
    \STATE Compute weights $\zeta_m$ and $\theta_j$ for $j\neq m$
    \STATE For each feasible $n'$ with $[\mathbf H_k]_{m,n'}>0$, compute $\mathcal Q_k(n')$
    \STATE Build $\mathcal{N}_k$ by selecting top $N'$ candidates by descending $\mathcal Q_k(n')$
    \FOR{each $n' \in \mathcal{N}_k$ }
      \STATE Update $S_i',I_i', \forall i$ incrementally; compute $R_i', \forall i$ and $F'$
      \IF{$ R_i'\ge R_t, \forall i$ \AND $F'> F$}
        \STATE Accept: $\gamma_{k,n} = 0$, $\gamma_{k,n'} = 1$; set $S_i = S_i'$, $I_i = I_i'$, $R_i = R_i', \forall i$, $F = F'$
        \STATE improved = \textbf{true}; \textbf{break}
      \ENDIF
    \ENDFOR
  \ENDFOR
  \IF{improved == \textbf{false}} \STATE \textbf{break} \ENDIF
\ENDFOR
\STATE \textbf{Output:} $\{\boldsymbol\gamma_k\}$
\end{algorithmic}
\end{algorithm}
After some straightforward algebraic transformations, \eqref{surrogate gain} can be recast into
\begin{align}
    \Delta \Tilde{F} &= P\left(\zeta_m [\mathbf{H}_k]_{m,n'} + \sum\limits_{\substack{j=1\\ j \neq m}}^M \theta_j [\mathbf{H}_k]_{j,n'}\right) \notag \\
    &- P\left(\zeta_m [\mathbf{H}_k]_{m,n} + \sum\limits_{\substack{j=1\\ j \neq m}}^M \theta_j [\mathbf{H}_k]_{j,n} \right), \label{surrogate gain 2}
\end{align}
where $\zeta_m = \frac{1}{\ln2} \cdot \frac{1}{T_m}$ and $\theta_j = \frac{1}{\ln2}\left(\frac{1}{T_j} - \frac{1}{U_j}\right)$. Note that the first term is related to the candidate position $n'$ and the current position $n$ and the second term is only related to the current position $n$. Therefore, we use the first term as a score to evaluate the move from $n$ to $n'$. The score of the new candidate position $n'$ can be expressed as
\begin{equation}
    \mathcal{Q}_k(n') = \zeta_m [\mathbf{H}_k]_{m,n'} + \sum\limits_{\substack{j=1\\ j \neq m}}^M \theta_j [\mathbf{H}_k]_{j,n'}. \label{candidate score}
\end{equation}
All feasible candidate positions \footnote{A feasible position means this position satisfies constraint \eqref{PI21}.} are ranked according to \eqref{candidate score}, and the PA examines only the top $N'$ positions. Let $\mathcal{N}_k$ denote the set collecting the top $N'$ feasible positions' indices of the $k-$th waveguide. The optimized position of $x_k^{\rm p}$ is given by
\begin{equation}
    x_k^{\rm p*} = \frac{n^* L_x}{N}, n^* = \arg \max_{n^{'} \in \mathcal{N}_k} F. \label{optimized position}
\end{equation}
The details are summarized in Algorithm \ref{sbcd}. \textit{improved} is a Boolean variable to indicate if the new position improves the sum rate. $T_{\rm max}$ denotes the total number of sweeps.

\begin{rem}
The score in \eqref{candidate score} is linear in the precomputed power columns of $\mathbf H_k$. Thus all scores for waveguide $k$ can be computed by a single matrix–vector product, using only light operations (adds/multiplies rather than logs/divisions). After scoring, we select the top $N'$  candidates ($N'\!\ll\!N$) and evaluate the true objective $F$ only for these, rather than for all $N$  positions. This sharply reduces expensive evaluations and accelerates the algorithm.
\end{rem}

\begin{rem}
Compared with the matching-theoretic scheme in \cite{wangblockage}, the Hungarian-based waveguide–user assignment yields the globally optimal solution to the linear assignment problem when PA positions are fixed. Moreover, \cite{wangblockage} maximizes sum rate without per-user QoS guarantees, whereas the proposed surrogate-assisted BCD explicitly enforces the individual QoS constraint \eqref{PI21}.
\end{rem}

\section{Joint Optimization of Beamforming and Continuous PA Placement Design}
In this section, we focus on solving the optimization problem ${\rm P}_0$. The main difference from the special case is that each PA serves all users and the PA position is continuous on the waveguide. Since beamforming and PA placement are inherently coupled, we adopt an alternating optimization strategy to iteratively refine both variables.
\vspace{-0.1cm}
\subsection{Beamforming Design}
The PA placement is assumed to be fixed during beamforming optimization. Then, ${\rm P_0}$ becomes a classic sum rate maximization problem for beamforming design, which can be efficiently solved by the WMMSE algorithm. 
In the WMMSE algorithm, the scalar equalizer $u_m$, MSE $e_m$ and positive weight $w_m$ are defined as
\begin{equation}
    u_m = \frac{\Tilde{\mathbf{h}}_m^H \mathbf{p}_m}{\sum_{i=1}^M |\Tilde{\mathbf{h}}_m^H \mathbf{p}_i|^2 + \sigma^2}, \label{equalizer}
\end{equation}
\begin{equation}
    e_m = 1-2 \mathfrak{R}\{u_m^* \Tilde{\mathbf{h}}_m^H \mathbf{p}_m\} + |u_m|^2\left(\sum\limits_{i=1}^M |\Tilde{\mathbf{h}}_m^H \mathbf{p}_i|^2 + \sigma^2\right), \label{MSE}
\end{equation}
and
\begin{equation}
    w_m = \frac{1}{e_m}, \label{positive weight}
\end{equation}
respectively. According to \eqref{equalizer} and \eqref{MSE}, MSE can be rewritten as follows:
\begin{equation}
    e_m = \frac{1}{1+ \text{SINR}_m} \label{MSE rewritten}
\end{equation}
As a result, the QoS constraint \eqref{PII01} can be rewritten as follows:
\begin{equation}
    e_m \leq \delta_m, \label{QoS constraint}
\end{equation}
where $\delta_m = 2^{-R_t}$.
\begin{algorithm}[t]
\caption{WMMSE Algorithm for Beamforming}
\label{alg:wmmse}
\begin{algorithmic}[1]
\STATE \textbf{Initialize:} $\mathbf H$, feasible $\mathbf P^{(0)}$, $\lambda^{(0)} \geq 0$, $\mathbf N^{(0)}\succeq\mathbf 0$, $t\gets 0$
\REPEAT
    \STATE \textbf{Receiver/weight update}: compute $u_m^{(t+1)}$, $e_m^{(t+1)}$, $w_m^{(t+1)}$ via \eqref{equalizer},\eqref{MSE},\eqref{positive weight} from $\mathbf P^{(t)}$; then form $\mathbf U^{(t+1)},\mathbf W^{(t+1)}$
    \STATE \textbf{Primal update:} compute $\mathbf P^{(t+1)}$ by \eqref{eq:p-update} from $\mathbf U^{(t+1)},\mathbf W^{(t+1)}$, $\mathbf N^{(t)}, \lambda^{(t)}$
    \STATE \textbf{Dual updates:} update $\lambda^{(t+1)}$ and $\nu_m^{(t+1)}, \forall m$ using \eqref{gradient ascend lambda},\eqref{gradient ascend nu}; then form $\mathbf N^{(t+1)}$
    \STATE $t\gets t+1$
\UNTIL{convergence}
\STATE \textbf{Output:} $\mathbf P$
\end{algorithmic}
\vspace{-0.1cm}
\end{algorithm}
Then, the sum rate maximization problem is transformed into the equivalent weighted MSE minimization problem, which can be written as follows:
\begin{subequations}\label{II-Prob1} 
\begin{align}
{\rm P_{1}}: \quad &\min_{\{\mathbf{P}\}} \sum\limits_{m=1}^M (w_me_m - \log w_m) \label{PII10}\\
\text{s.t.} \quad & \; e_m \leq \delta_m, \quad \forall m\\
\quad\quad & \sum_{m=1}^M ||\mathbf{p}_m||^2 \leq P_t. \label{PII11}
\end{align}
\end{subequations}
With fixed $u_m, w_m, \forall m$, the subproblem ${\rm P_{1}}$ is convex quadratic. The Lagrange function of ${\rm P_{1}}$ is given by
\begin{align}
    \mathcal{L}  &=  \sum\limits_{m=1}^M (w_me_m - \log w_m) \notag \\
    &+ \sum\limits_{m=1}^M \nu_m (e_m - \delta_m) + \lambda \left(\sum\limits_{m=1}^M||\mathbf{p}_m||^2 - P_t \right), \label{Lagrange}
\end{align}
where $\nu_m, \forall m$ and $\lambda$ are non-negative Lagrange multipliers. Let $\mathbf{U} = \di\{u_1, \cdots, u_M\}$, $\mathbf{W} = \di \{w_1, \cdots, w_M\}$, and $\mathbf{N} = \di \{\nu_1, \cdots, \nu_M\}$. After some straightforward algebraic transformations. the $\mathbf{P}$-dependent Lagrange function can be expressed as follows:
\begin{align}
    \mathcal{L}(\mathbf{P}) &= \Tr (\mathbf{P}^H (\mathbf{A} + \lambda \mathbf{I}_K)\mathbf{P}) \notag \\
    & -\Tr ((\mathbf{H} \mathbf{U}(\mathbf{W} + \mathbf{N}))^H \mathbf{P}) - \Tr(\mathbf{P}^H \mathbf{H}\mathbf{U}(\mathbf{W} + \mathbf{N})), \label{Lagrange related to P}
\end{align}
where $\mathbf{I}_K$ is a $K \times K$ identity matrix, $\mathbf{A} = \mathbf{H}\mathbf{U}(\mathbf{W} + \mathbf{N})\mathbf{U}^H \mathbf{H}^H$ and $\mathbf{H} = [\Tilde{\mathbf{h}}_1, \cdots, \Tilde{\mathbf{h}}_M]$. According to the KKT conditions, we have
\begin{equation}
    \frac{\partial \mathcal{L}(\mathbf{P})}{\partial \mathbf{P}^*} = (\mathbf{A} + \lambda \mathbf{I}_K)\mathbf{P} - \mathbf{H}\mathbf{U}(\mathbf{W} + \mathbf{N}) = \mathbf{0}. \label{KKT condition}
\end{equation}
Then, the beamforming matrix can be calculated by
\begin{equation}
    \mathbf{P} = (\mathbf{A}+\lambda \mathbf{I}_K)^{-1}\mathbf{H}\mathbf{U}(\mathbf{W} + \mathbf{N}). \label{eq:p-update}
\end{equation}
According to complementary slackness, we have
\begin{equation}
    \lambda (\sum_{m} ||\mathbf{p}_m||^2 - P_t) = 0 \label{complementary slackness 1}
\end{equation}
and
\begin{equation}
    \nu_m (e_m - \delta_m) = 0, \forall m. \label{complementary slackness 2}
\end{equation}
We can update $\lambda$ and $\nu_m$ via gradient ascent. The updating rule is given by
\begin{equation}
    \lambda' \gets \left[\lambda+\tau(\|\mathbf P\|_F^2-P_t)\right]_+ \label{gradient ascend lambda} 
\end{equation}
and
\begin{equation}
    \nu_m' \gets \left[\nu_m + \rho (e_m - \delta_m)\right]_+, \forall m \label{gradient ascend nu} 
\end{equation}
respectively. The operation $[x]_+ = \max \{0,x\}$ ensures Lagrange multipliers are non-negative. The details are summarized in Algorithm \ref{alg:wmmse}.

\subsection{PA Placement Optimization}
Unlike open spaces where channel variations are smooth, obstacles cause abrupt changes in channel gains: a slight shift in PA position can suddenly switch a user from LoS to nLoS conditions. This discontinuity makes the optimization of PA placement challenging, since traditional gradient-based or convex approaches assume smooth objective functions. In this case, we therefore adopt DDPG as a solver to determine PA positions. \par
To apply DDPG, we first define the state, action and reward as follows:
\begin{itemize}
    \item State: The state space $s$ is a 1-D vector denoted by $s=(\psi_1, \cdots \psi_M, \psi_1^c, \cdots, \psi_B^c, r_1, \cdots, r_B)$. The state encapsulates information about user locations and obstacle characteristics.
    \item Action: The action space $a$ is the PA placement vector. Since each PA is deployed on one waveguide and the waveguides are fixed, the action can only be the $x$-axis coordinate of each PA, denoted by $a = [x_1^{\rm Pin}, \cdots, x_K^{\rm Pin}]$.
    \item Reward: Our goal is to maximize the sum rate. As a result, the intuitive reward function is the sum rate. However, to ensure the QoS constraint, we modify the reward function as follows
    \begin{equation}
       rd =  \sum\limits_{m=1}^MR_m - \sum\limits_{m=1}^M \varrho_m g_m, \label{reward function}
    \end{equation}
    where $\varrho_m$ and $g_m$ are the penalty weight and the violation score for the QoS constraint of ${\rm U}_m$, respectively. The violation score can be approximated by a smooth function, which is given by
    \begin{equation}
        g_m = \tau \log\left(1 + e ^{(R_t - R_m)/\tau}\right), \label{violation score}
    \end{equation}
    where $\tau$ is the temperature that controls how sharply the penalty transitions around the QoS boundary $R_t$. 
\end{itemize} \par
\begin{figure}[t]
     \centering
     \includegraphics[width=0.5\textwidth]{./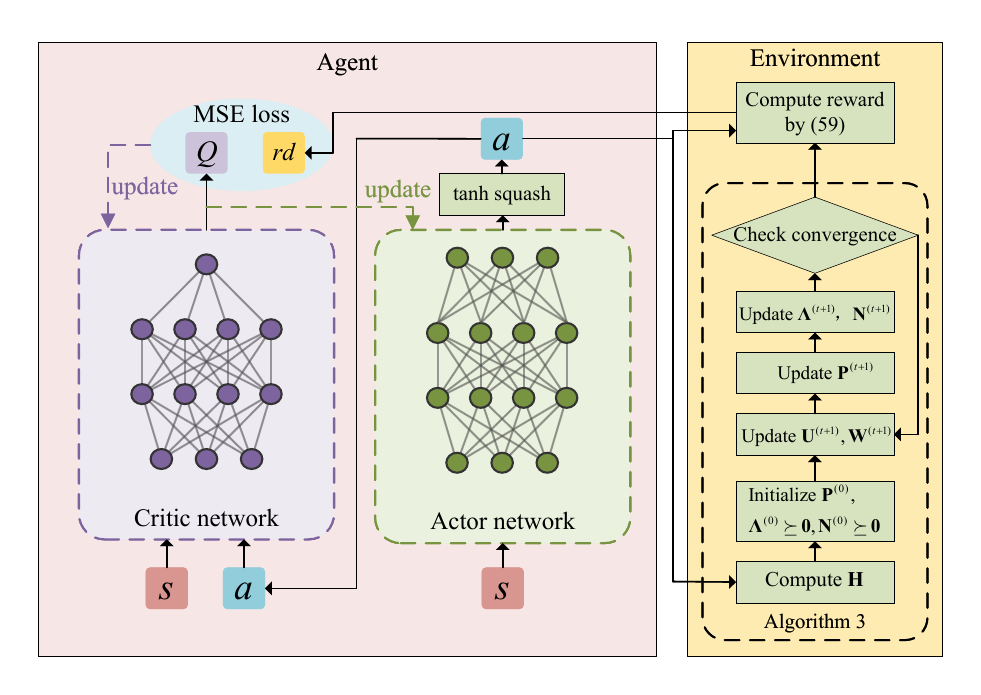}
     \caption{Block diagram of WMMSE-DDPG}
     \label{WMMSE-DDPG}
\end{figure}
Note that the PA placement design can be modeled as a contextual bandit problem. In particular, the agent observes the environment, selects PA positions, and immediately receives a reward based on the sum rate achieved for a given user and obstacle configuration. The objective is to maximize this one-step reward and there is no future reward to estimate. As a result, DDPG’s target networks do not contribute to learning in this setting. Accordingly, we train only an online actor–critic: one actor that outputs PA positions and one critic that regresses to the immediate reward and provides a stable learning signal. Let the online actor be $\pi(\phi)$ and the online critic be $Q(\theta)$, where $\phi$ and $\theta$ denote the corresponding parameters. The online critic network updates parameters by minimizing an MSE loss, which is given by
\begin{equation}
    \mathcal{L}(\theta) = \left(Q(s,a|\theta) - rd\ \right)^2. \label{MSE loss}
\end{equation}
The gradient of $\phi$ for updating the actor is calculated by
\begin{equation}
    \nabla_{\phi} J(\phi) =  \nabla_{\pi(s|\phi)} Q\left(s, \pi(s|\phi)\right)|\nabla_{\phi} \pi_{\phi}(s). \label{actor update}
\end{equation}
Since the actor outputs unconstrained logits $\mathbf{u} \in \mathbb{R}^K$, we need to regularize $\mathbf{u}$ to obtain a feasible action. The regularization is given by
\begin{equation}
    x_i^{\rm Pin} = \frac{L_x}{2}\left(\tanh(u_i) + 1 \right), i = \{1, \cdots, K\}, \label{action regularization}
\end{equation}
where $u_i$ is the $i-$th element of $\mathbf{u}$.

\subsection{Algorithm}
Fig. \ref{WMMSE-DDPG} presents the overall architecture of the proposed WMMSE-DDPG framework for joint optimization of continuous PA placement and beamforming. The agent receives the state, which contains user location information, and the actor outputs the corresponding PA coordinates as the action. The environment then constructs the obstacle-aware channel, executes the WMMSE algorithm to obtain the beamforming matrix, and computes the reward. The critic takes the state–action pair as input and estimates its value. It is trained using a mean-squared-error loss to align the predicted value with the observed reward, while the actor is subsequently updated through the critic’s feedback to improve future decisions.

\section{Simulation Results}
\begin{table}[t]
\centering
\caption{Simulation Parameters}
\begin{tabular}{lc}
\hline
\textbf{Parameter} & \textbf{Value} \\ \hline
Area size & $L_x{=}30$~m, $L_y{=}20$~m \\
Height & $d = 2.5$~m \\
Carrier frequency & $f_c{=}28$~GHz \\
Speed of light & $c{=}3\times10^8$~m/s \\
Noise power & $\sigma^2 = -120$~dBm/Hz \\
Transmit power & $P_t = 30$~dBm \\
PA candidates per waveguide & $N{=}100$ \\
Shortlist size in BCD & $N'{=}20$ \\
Target rate threshold & $R_t{=}0.5$~bps/Hz \\
Actor hidden sizes & $(256,256)$ \\
Critic hidden sizes & $(256,256)$ \\
Learning rate & $10^{-4}$ \\
Temperature & $\tau = 0.01$ \\ \hline
\end{tabular}
\label{tab:hyperparams_design1}
\end{table}
We consider a rectangular service area of size $L_x \times L_y = 30~\text{m} \times 20~\text{m}$ with $K$ parallel dielectric waveguides deployed at height $d = 2.5$~m. Within the service area, there are several cylinder-shaped obstacles. The key simulation parameters are given in Table~I. 
\subsection{Evaluation of the special case}
In this case, each waveguide is uniformly quantized into \(N=100\) candidate PA positions. The proposed algorithm is evaluated against several benchmarks under identical simulation settings. The considered methods are summarized as follows. \textit{BCD-AO}: the proposed alternating optimization framework that integrates the Hungarian algorithm for waveguide–user assignment with a surrogate-assisted BCD search for PA placement. \textit{SwapMatching}: a benchmark that employs the swap-matching mechanism presented in \cite{wangblockage}. \textit{RandomClosest}: users are randomly assigned to waveguides, and each PA is positioned at the candidate location closest to its associated user. \textit{HungarianRandom}: waveguide–user assignment is performed via the Hungarian algorithm, while PA positions are randomly selected from the candidate set. \textit{FixAntenna}: a conventional fixed-antenna configuration, where each antenna is placed at the feed point of its corresponding waveguide. \textit{RandomRandom}: both waveguide–user assignment and PA placement are randomly determined. The simulation results compare the achievable sum rate and feasibility ratio among these methods.

\begin{figure}[t]
     \centering
     \includegraphics[width=0.4\textwidth]{./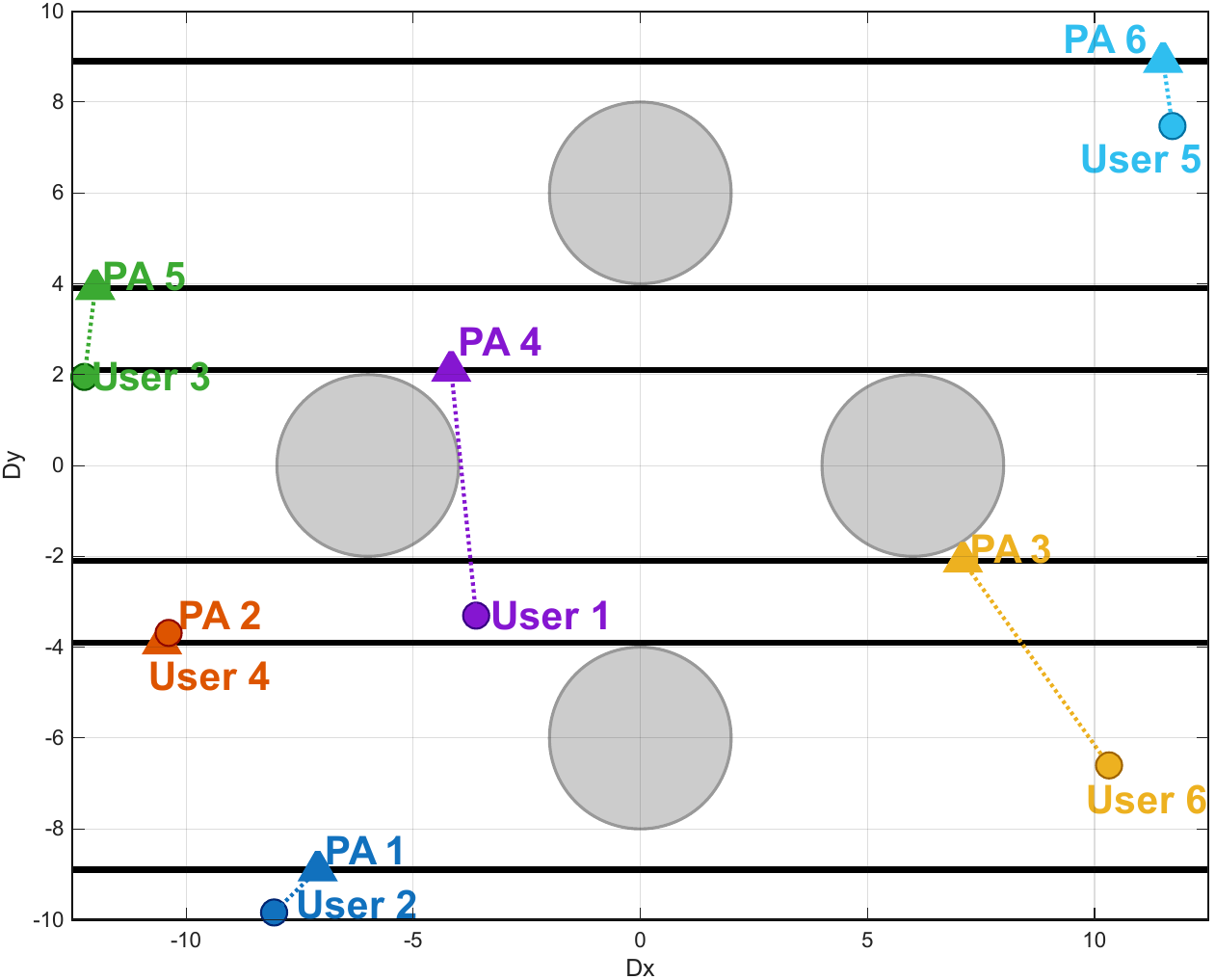}
     \caption{ An example of the proposed solution, where $K=M=6$ and four obstacles with $r=2$ arranged in a diamond-shaped layout.}
     \label{diamond_obs}
\end{figure}

\begin{figure}[t]
     \centering
     \includegraphics[width=0.4\textwidth]{./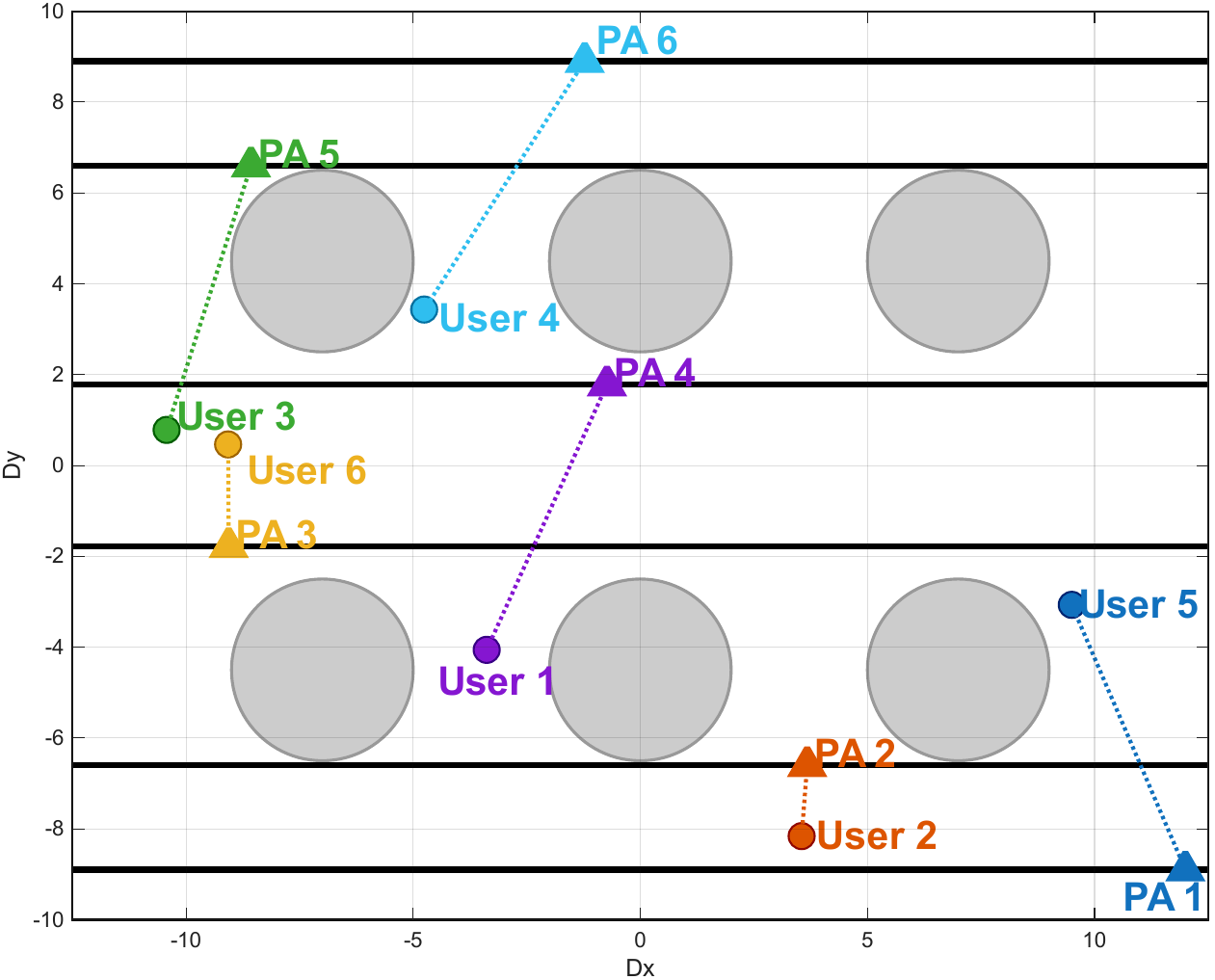}
     \caption{An example of the proposed solution, where $K=M=6$ and six obstacles with $r=2$ arranged in a grid-shaped layout}
     \label{grid_obs}
\end{figure}

Fig. \ref{diamond_obs} and Fig. \ref{grid_obs} illustrate examples of the proposed blockage-aware PA placement solution under two obstacle layouts. In both cases, the system includes $K=M=6$ waveguides and users, with cylinder-shaped obstacles of radius $r=2$~m. Fig. \ref{diamond_obs} shows a diamond-shaped obstacle arrangement. Fig. \ref{grid_obs} presents a grid-shaped obstacle configuration, resulting in denser blockage and more complex propagation paths. In both scenarios, the algorithm adaptively positions PAs to establish LoS links for all users while simultaneously leveraging obstacles to suppress co-channel interference. As a result, it achieves improved sum rate and stable LoS connectivity.

\begin{figure}[t]
     \centering
     \includegraphics[width=0.4\textwidth]{./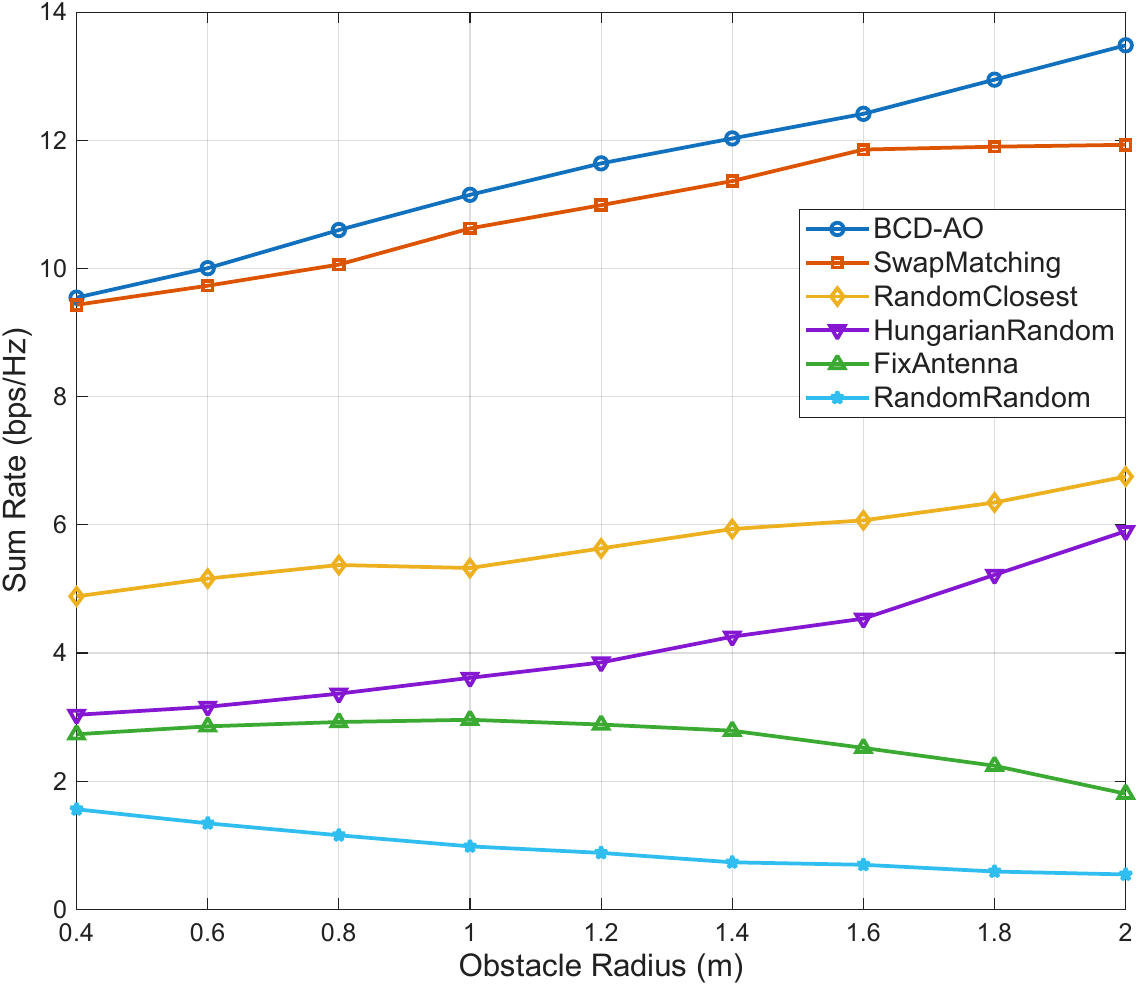}
     \caption{Sum rate performance versus obstacle radius.}
     \label{radius design1}
\end{figure}
Fig. \ref{radius design1} illustrates the sum rate performance versus the obstacle radius. The experimental setting is the same as that in Fig. \ref{grid_obs}. The proposed BCD-AO algorithm achieves the highest throughput across all obstacle sizes, while random and fixed-antenna schemes show much lower performance. The performance gap between adaptive PA-based algorithms and static configurations increases as the obstacle radius grows. An important observation is that the sum rate of the pinching-antenna system improves with larger obstacles, whereas the fixed-antenna system suffers a performance loss. This result highlights a key advantage of pinching antennas that they can dynamically adjust their positions to exploit obstacles for interference suppression. As obstacles become larger, they create stronger blockage and greater spatial diversity, providing more chances for PAs to form LoS connections and avoid co-channel interference, which ultimately enhances system throughput.

\begin{figure}[t]
     \centering
     \includegraphics[width=0.4\textwidth]{./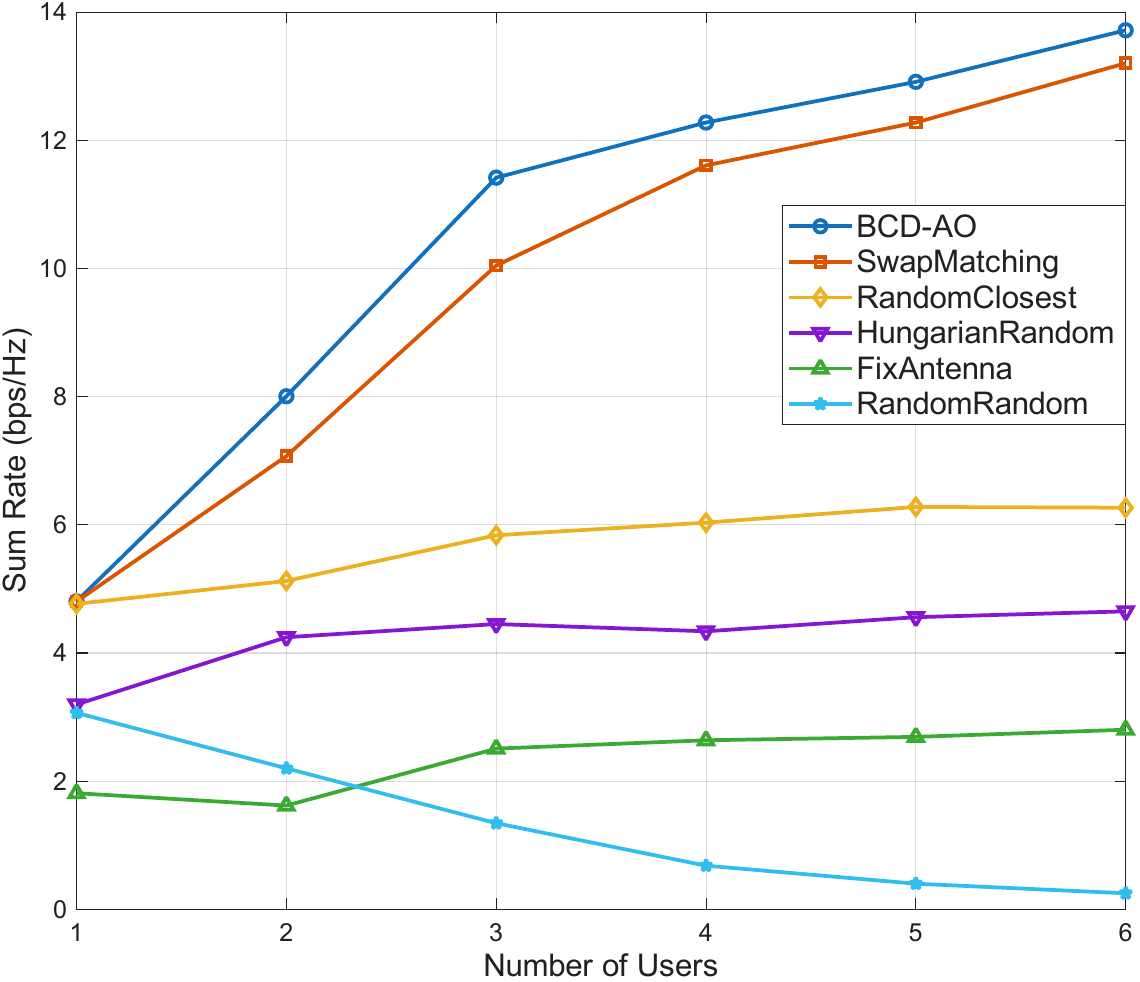}
     \caption{Sum rate performance versus the number of users.}
     \label{user design1}
\end{figure}
Fig. \ref{user design1} illustrates the sum rate performance versus the number of users. The experimental setting is the same as that in Fig. \ref{grid_obs}. The proposed BCD-AO algorithm consistently achieves the highest throughput across all user counts, while the other benchmark schemes show significantly lower performance. As the number of users increases, the sum rate of adaptive PA-based schemes grows rapidly due to their ability to dynamically optimize PA positions and maintain favorable LoS links. In contrast, static configurations such as FixAntenna and RandomRandom exhibit limited scalability because they cannot adapt to increased interference or channel variation. These results indicate that the proposed algorithm can dynamically adapt PA positions to accommodate an increasing number of users without compromising spectral efficiency.

\subsection{Evaluation of the general case}
In this case, each PA can be placed at any position along the waveguide. The proposed algorithm is evaluated against several benchmarks under identical simulation settings. In addition to the proposed WMMSE-based beamforming design, three conventional beamforming schemes are considered for comparison which are maximum ratio combining (MRC), zero-forcing (ZF), and random beamforming. For PA placement optimization, a DDPG-based approach is developed to determine the optimal PA positions. Besides the proposed DDPG method, two baseline schemes are included for performance comparison: a grid-search algorithm and a fixed-antenna system. The grid-search baseline is a coordinate-wise method for optimizing PA positions. It starts from the current PA locations and builds a set of candidate points evenly spaced along the waveguide. A beamforming scheme, such as WMMSE, MRC, ZF, or random beamforming, is first selected to evaluate performance. Then, each PA is optimized one at a time while keeping the others fixed. For each PA, the algorithm tests all candidate positions, computes the corresponding channel and beamforming matrix, and selects the position that gives the highest sum rate. This process is repeated for several passes until no further improvement is observed. The number of candidate positions is set as 25 on each waveguide. In the fixed-antenna system, all antennas are permanently located at the feed point of the waveguide without any movement. \par

Fig. \ref{loss} illustrates the training process of the proposed DDPG agent. The top subfigure shows the reward evolution over training steps, where the reward steadily increases and gradually converges. This indicates that the agent successfully learns an effective policy for PA placement. The bottom subfigure shows the corresponding critic and actor losses. Both losses decrease rapidly during the early training phase and stabilize as learning progresses. Together, these results demonstrate that the DDPG agent efficiently learns the optimal strategy with good convergence properties. \par

\begin{figure}[t]
     \centering
     \includegraphics[width=0.4\textwidth]{./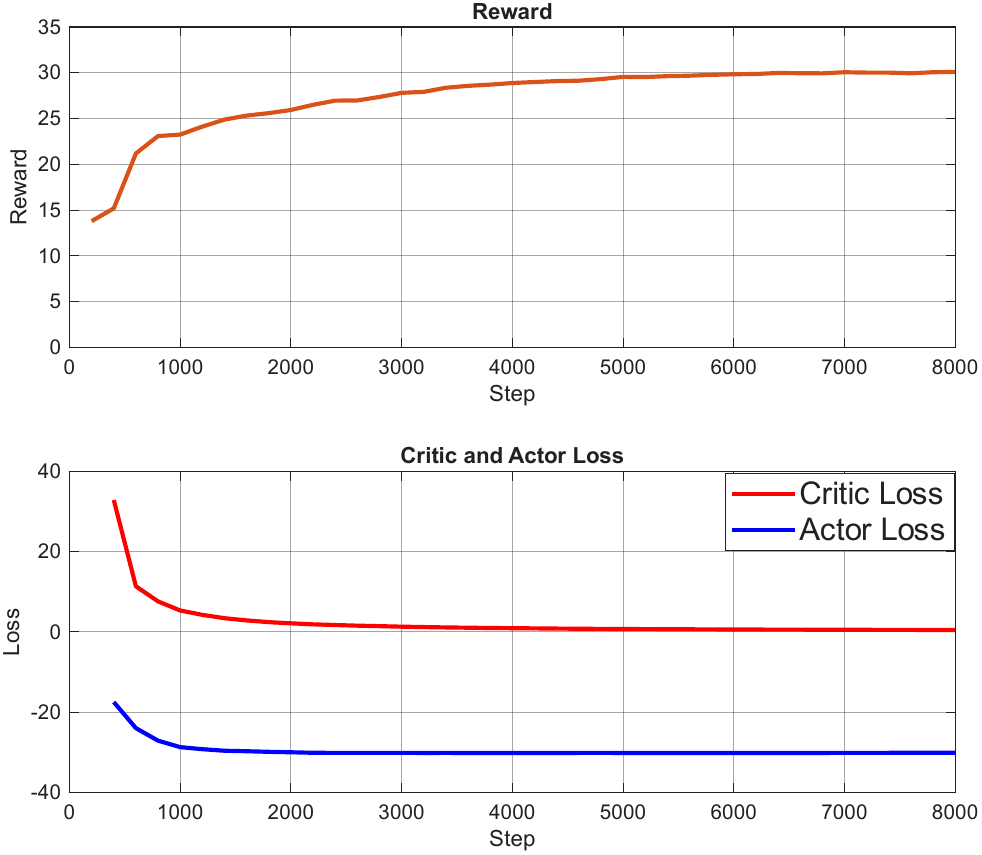}
     \caption{Reward and loss versus training step.}
     \label{loss}
\end{figure}

\begin{figure}[t]
     \centering
     \includegraphics[width=0.4\textwidth]{./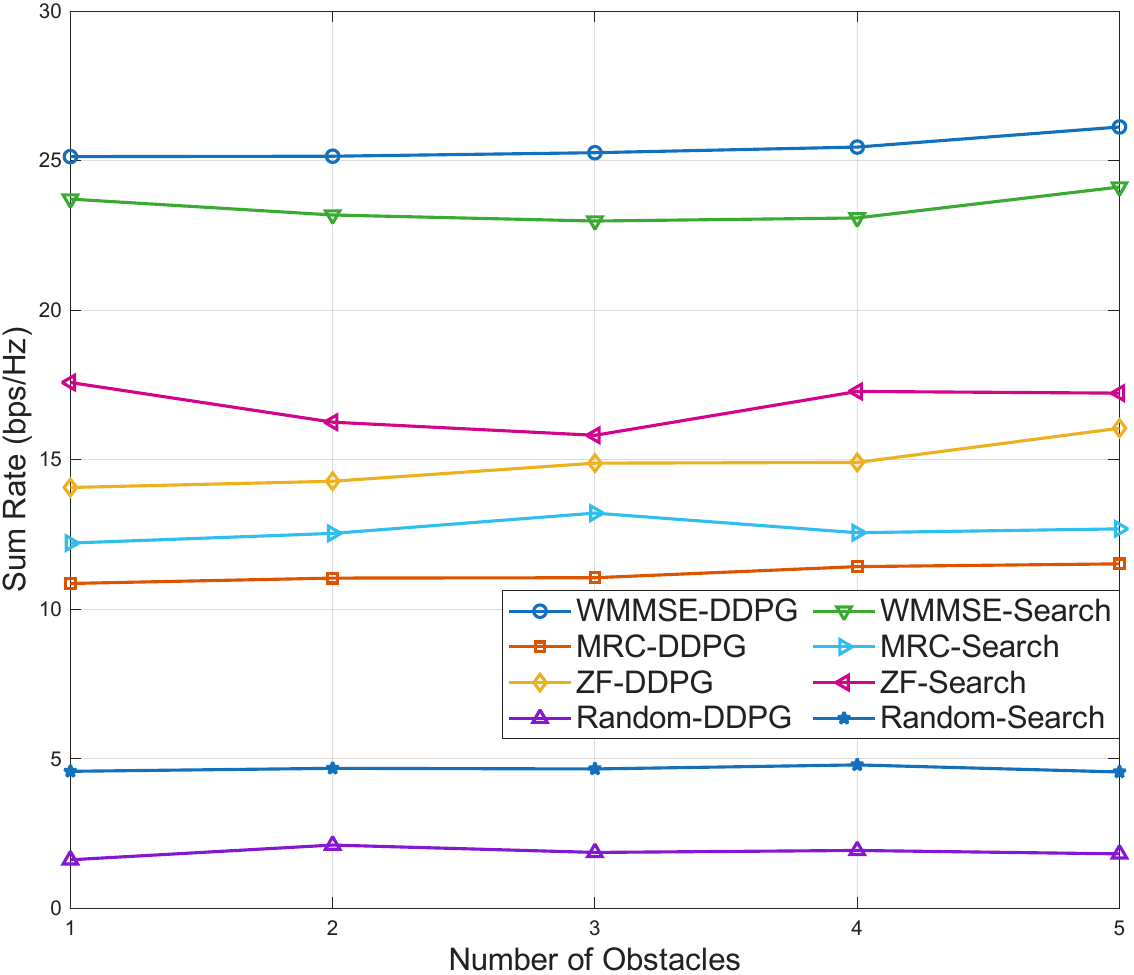}
     \caption{Sum rate performance versus number of obstacles: DDPG and grid-search}
     \label{DDPG and grid-search}
\end{figure}

\begin{figure}[t]
     \centering
     \includegraphics[width=0.4\textwidth]{./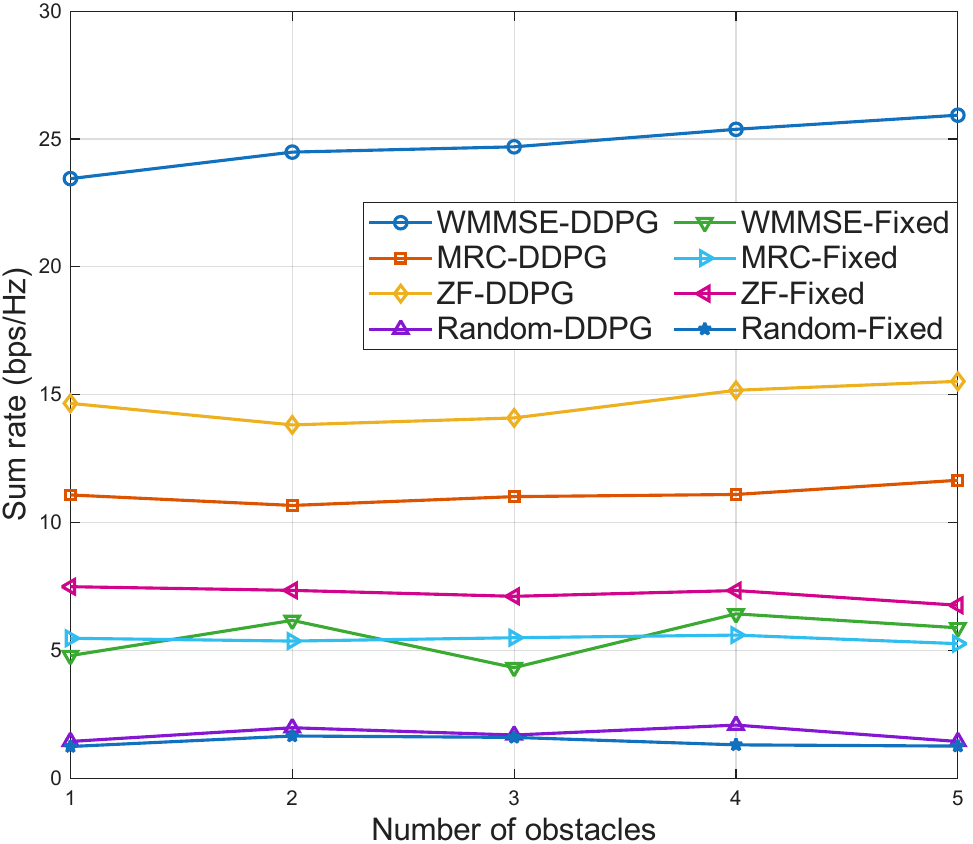}
     \caption{Sum rate performance versus number of obstacles: pinching-antenna systems and fixed-antenna systems.}
     \label{pa and fixed}
\end{figure}

\begin{figure}[t]
     \centering
     \includegraphics[width=0.4\textwidth]{./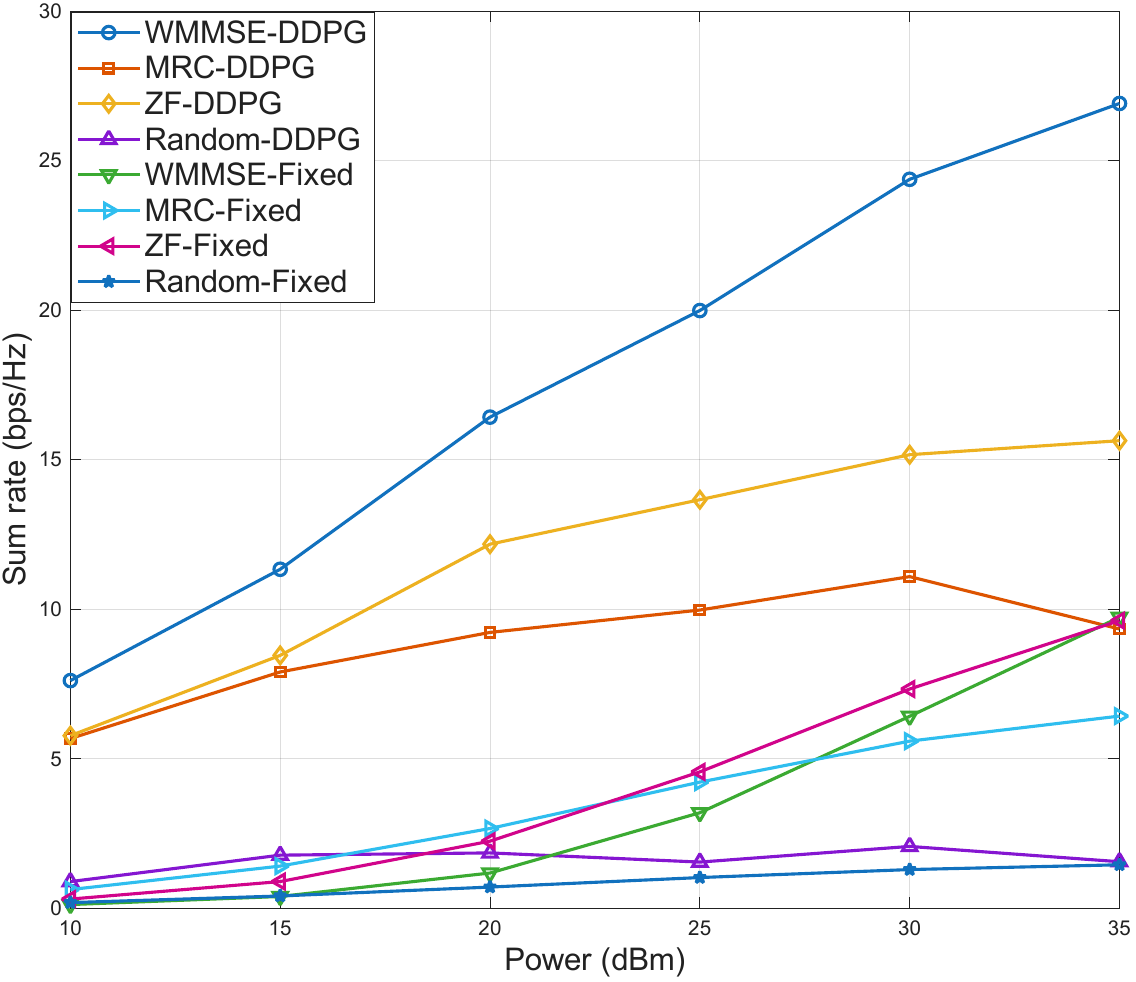}
     \caption{Sum rate performance versus total transmit power}
     \label{power}
\end{figure}

Fig. \ref{DDPG and grid-search} illustrates the sum rate performance versus the number of obstacles for the comparison between learning-based and search-based approaches for optimizing PA positions. The radius of each obstacle is set as 1m and the number of users is set as 4. The proposed WMMSE-DDPG method achieves the highest throughput for all obstacle counts. As the number of obstacles increases, the sum rate of WMMSE-DDPG slightly improves, indicating that the algorithm can leverage obstacles to reduce interference while maintaining LoS links. In contrast, the performance of the search-based and random schemes remains mostly unchanged, which means their limited ability to exploit environmental information. Overall, the results confirm that the proposed learning-based design is more robust to blockage variations and provides better interference management. \par
Fig. \ref{pa and fixed} illustrates the sum rate performance versus the number of obstacles for the comparison between pinching-antenna systems and fixed-antenna system. The experimental setting is the same as that in Fig. \ref{DDPG and grid-search}. As the number of obstacles increases, the sum rate of PA-based systems gradually improves because movable PAs can adjust their positions to maintain LoS links and reduce interference. In contrast, the performance of fixed-antenna systems is much worse than pinching-antenna systems due to increased blockage
and limited adaptability. These results highlight the strong capability of pinching-antenna systems to exploit environmental features for interference mitigation and throughput enhancement. \par
Fig. \ref{power} illustrates the sum rate performance versus total transmit power for the comparison between pinching-antenna systems and fixed-antenna system. The number of obstacles is set as 4 and all obstacles have the same radius 1m. The number of users is set as 4. The proposed WMMSE-DDPG algorithm achieves the highest throughput across all power levels, demonstrating its effectiveness in jointly optimizing beamforming and PA placement. Compared with the baseline schemes, MRC-DDPG and ZF-DDPG achieve moderate gains but suffer from limited interference suppression, especially at higher transmit powers. In contrast, all fixed-antenna schemes show significantly lower performance. As transmit power increases, the performance gap between two antenna systems becomes large, which means that dynamic PA control can better utilize available power to enhance spectral efficiency. \par

\subsection{Discussion}
This work is an initial study on pinching-antenna systems in blockage-aware environments. While the proposed framework establishes a foundation for understanding and optimizing PA behavior under obstruction, several extensions can be explored to make the system more comprehensive and practical. First, the current blockage model focuses on cylinder-shaped obstacles with height not less than $d$. Although this assumption simplifies geometric analysis and visibility determination, future studies can extend the model to more complex obstacle types such as rectangular pillars, furniture, and human bodies. These extensions would allow more accurate characterization of real environments. Second, this work considers only LoS links, where users without LoS connections are assumed to be out of service. A more practical setup could include both LoS and NLoS links, capturing multi-path effects. Third, user locations are assumed to be static during optimization. In realistic scenarios, user mobility continuously changes channel conditions and blockage states. Incorporating mobility-aware control and online PA repositioning would enable dynamic adaptation and enhance system robustness.

\section{Conclusion}
This paper investigated pinching-antenna systems in blockage-aware environments with cylinder-shaped obstacles. A deterministic blockage model was developed to accurately characterize LoS conditions without relying on stochastic assumptions. Based on this model, two PA configurations were examined. In the special case, each PA serves a single user and can only be placed at discrete positions along the waveguide. An alternating optimization framework combining the Hungarian algorithm and a surrogate-assisted BCD search was proposed. In the general case, each PA serves all users with continuously adjustable placement, where a WMMSE-DDPG approach was developed to jointly optimize beamforming and PA positions. This paper demonstrated that pinching-antenna systems can effectively utilize obstacles to mitigate co-channel interference and transform potential blockages into throughput gains, whereas conventional fixed-antenna systems experience performance degradation as blockage increases. This capability makes pinching-antenna systems well suited to obstacle-rich indoor environments. Future work may extend the blockage model to diverse obstacle geometries and incorporate user mobility.

\bibliographystyle{IEEEtran}
\bibliography{IEEEfull,pinchingblockage}
\end{document}